\newcommand{\R}{\mathbb R}
\newcommand{\C}{\mathbb C}
\renewcommand{\H}{\mathbb H}
\newcommand{\D}{\mathbb D}
\newcommand{\Tr}{\text{Tr}}
\newcommand{\Vol}{\text{Vol}}
\renewcommand{\P}{\mathbb P}
\renewcommand{\Re}{\text{Re}}
\renewcommand{\Im}{\text{Im}}
\newcommand{\dom}{\text{dom}}
\newcommand{\vare}{\varepsilon}
\DeclareMathOperator{\zdet}{det_\zeta}
\DeclareMathOperator{\dist}{dist}
\newtheorem{theorem}{Theorem}
\newtheorem{lemma}{Lemma}
\theoremstyle{definition}
\newtheorem{definition}{Definition}
\theoremstyle{remark}
\newtheorem{remark}{Remark}
\title{Polyakov-Alvarez Formula for \\Curvilinear Polygonal Domains with Slits}
\author{Ellen Krusell\\ \textit{KTH Royal Institute of Technology}\\ \texttt{ekrusell@kth.se}}
\date{}
\begin{document}
\maketitle
\pagenumbering{arabic}
\abstract{We consider the $\zeta$-regularized determinant of the Friedrichs extension of the Dirichlet Laplace-Beltrami operator on curvilinear polygonal domains with corners of arbitrary positive angles. In particular, this includes slit domains. We obtain a short time asymptotic expansion of the heat trace using a classical patchwork method. This allows us to define the $\zeta$-regularized determinant of the Laplacian and prove a comparison formula of Polyakov-Alvarez type for a smooth and conformal change of metric.}
\section{Introduction}
The \emph{Dirichlet Laplace-Beltrami} operator $\Delta_{(M,g)}$ on a compact manifold $(M,g)$ with smooth (non-empty) boundary has, with our sign convention, a discrete and positive spectrum
$$0<\lambda_1\leq \lambda_2\leq \lambda_3\leq ....$$
with $\lambda_n\to\infty$ as $n\to\infty$. Hence, it is not possible to define $\det \Delta_{(M,g)}$ in the classical sense. One can, however, define a regularized version of the determinant via the following procedure: Define the \emph{spectral $\zeta$-function} by
$$\zeta(s)=\sum_{n\geq 1}\lambda_n^{-s},\quad \Re s>1,$$
where the right-hand side converges by Weyl's law~\cite{Weyl}. As observed in~\cite{RaySinger}, it turns out that the right-hand side can be expressed as an integral involving $\Tr(e^{-t\Delta_{(M,g)}})$. By employing a short time asymptotic expansion of $\Tr(e^{-t\Delta_{(M,g)}})$ one can then show that $\zeta(s)$ may be analytically continued to a neighborhood of $0$. The formal computation 
$$\zeta'_{(M,g)}(s)=\sum_{n\geq 1}-\log\lambda_n \lambda_n^{-s},\ \Re s>1 \leadsto \text{``}\zeta_{(M,g)}'(0)=-\log\prod_{n\geq 1}\lambda_n,\text{''}$$ 
then justifies defining the $\zeta$-\emph{regularized determinant} of $\Delta_{(M,g)}$ by
\begin{equation}\zdet \Delta_{(M,g)} := e^{-\zeta_{(M,g)}'(0)}.\label{eq:zdet}\end{equation}\par
In~\cite{Polyakov}, Polyakov gave a formula for the variation of the $\zeta$-regularized determinant of the Laplace-Beltrami operator on a closed manifold under a smooth conformal change of metric. A similar formula, for compact manifolds with (smooth) boundary, was given by Alvarez in~\cite{Alvarez}. See also~\cite{OPS}. More recently, Polyakov-Alvarez type formulas have been proved in settings of less regularity, for instance in curvilinear polygonal domains where the interior angles belong to the open set $(0,2\pi)$~\cite{AKR}. In the present article, we prove a Polyakov-Alvarez formula for curvilinear polygonal domains where the interior angles belong to $(0,\infty)$. In particular, this includes smooth slit domains, that is, a smooth domain minus a smooth curve intersecting the boundary non-tangentially (where we consider the two sides of the cut out curve as different parts of the boundary, i.e., in the sense of prime ends).
\par 
\subsection{Main results}
Slightly imprecisely (we give the precise definition in Section~\ref{section:cpd}), a curvilinear polygonal domain $(M,g,(p_j),(\alpha_j))$ is a compact surface $M=M^\circ\cup \partial M$ with boundary $\partial M\neq \varnothing$ along with finitely many points $p_1,...,p_n\in\partial M$ and a smooth Riemannian metric $g$ on $M\setminus\{p_1,...,p_n\}$ such that there, for each $j=1,...,n$, exists a smooth isothermal coordinate in a neighborhood of $p_j$ in which $\partial M$ (locally) consists of two smooth boundary arcs forming an interior angle $\alpha_j\pi>0$ at $p_j$.\par 
The Dirichlet heat kernel $H_{(M,g)}$ is the (minimal) fundamental solution to the Dirichlet heat equation on $(M,g)$. The trace of the heat kernel, the heat trace, satisfies
$$\Tr(e^{-t\Delta_{(M,g)}})=\int_{M}H_{(M,g)}(t;p,p)d\Vol_{g}(p).$$
In Section~\ref{section:heattrace} we prove the following short time asymptotic expansion of the heat trace in a curvilinear polygonal domain.
\begin{theorem}\label{thm:heattrace}
Let $(M,g_0,(p_j),(\alpha_j))$ be a curvilinear polygonal domain, $\sigma,\psi\in C^\infty(M,g,(p_j),(\alpha_j))$, and define $g_u:=e^{2u\sigma}g_0$. Then, for each $q\in(0,1/2)$ and each $u\in \R$,
\begin{align}\begin{split}\int_{M}\psi(p) H_{(M,g_u)}(t;p,p)d\Vol_{g_u} = & \frac{1}{4\pi t}\int_M\psi\bigg(1+\frac{t}{3}K_{g_u}\bigg)d\Vol_{g_u}-\frac{1}{8\sqrt{\pi t}}\int_{\partial M}\psi d\ell_{g_u}\\&+\frac{1}{12\pi}\int_{\partial M}\psi k_{g_u}d\ell_{g_u}+\frac{1}{8\pi}\int_{\partial M}\partial_{n_{g_u}}\psi d\ell_{g_u}\\&+\frac{1}{24}\sum_{j=1}^n\psi(p_j)\frac{1-\alpha_j^2}{\alpha_j}+O(t^q),
\end{split}\label{eq:heattracepsi}
\end{align}
where the error is locally uniform in $u$ as $t\to 0+$.
\end{theorem}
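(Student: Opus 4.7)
The plan is to prove~\eqref{eq:heattracepsi} by the classical parametrix (``patchwork'') method for heat kernels on singular Riemannian surfaces, adapted to accommodate corners of arbitrary angle $\alpha_j\pi>0$. First I would localise: cover $M$ by three families of coordinate neighborhoods --- small isothermal neighborhoods $U_j$ around each corner $p_j$ (in which $g_u$ is conformal to a Euclidean wedge of opening $\alpha_j\pi$), tubular neighborhoods $V_\ell$ of the smooth boundary arcs away from the corners, and an interior open set $W$ staying at fixed distance from $\partial M$ --- and split the integral on the left of~\eqref{eq:heattracepsi} using a smooth partition of unity subordinate to this cover.

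On each type of piece I would replace the true Dirichlet heat kernel $H_{(M,g_u)}$ by a model kernel with a known on-diagonal short-time expansion. In the interior, the full-plane Gaussian (combined with the standard Minakshisundaram--Pleijel expansion on $(M,g_u)$) supplies $\frac{1}{4\pi t}\int\psi\,d\mathrm{Vol}_{g_u}$ together with the Gauss curvature contribution $\frac{1}{12\pi}\int\psi K_{g_u}\,d\mathrm{Vol}_{g_u}$. Along a smooth boundary arc, the Dirichlet half-plane kernel (method of images), after geodesic straightening and a standard expansion, produces the boundary integrals $-\frac{1}{8\sqrt{\pi t}}\int_{\partial M}\psi\,d\ell_{g_u}$, $\frac{1}{12\pi}\int_{\partial M}\psi k_{g_u}\,d\ell_{g_u}$ and $\frac{1}{8\pi}\int_{\partial M}\partial_{n_{g_u}}\psi\,d\ell_{g_u}$. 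Near each corner $p_j$ I would use the Dirichlet heat kernel on the infinite Euclidean wedge of opening $\alpha_j\pi$; its diagonal trace, after subtracting the two half-plane contributions along the edges, produces the universal constant $\frac{1-\alpha_j^2}{24\alpha_j}$. The mismatch between $H_{(M,g_u)}$ and each model kernel is controlled by Kac's principle of ``not feeling the boundary'': on overlaps of patches the two candidate kernels differ by a Gaussian of size $Ct^{-1}e^{-c/t}$, which after integration against $\psi$ is $O(t^N)$ for every $N$ and thus absorbable into $O(t^q)$. Because $e^{2u\sigma}$ is smooth and bounded away from zero on $M$, each model computation may be carried out with the conformal factor frozen at the base point; the remainder depends smoothly on $u$, giving local uniformity.

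The principal obstacle, and the novelty relative to~\cite{AKR}, is the wedge heat-trace analysis for $\alpha_j\in[2,\infty)$ --- in particular $\alpha_j=2$, corresponding to slits. For $\alpha_j\in(0,2)$ the constant $\frac{1-\alpha_j^2}{24\alpha_j}$ is classical (Kac; van den Berg--Srisatkunarajah), but for $\alpha_j\geq 2$ the wedge is non-convex (even multi-sheeted when $\alpha_j>2$) and the expansion must be extracted from Carslaw's Bessel-function representation of the wedge heat kernel, by a contour-shift and dominated-convergence argument, uniform in $\alpha_j$ on compact subsets of $(0,\infty)$. Since $\frac{1-\alpha^2}{24\alpha}$ is analytic in $\alpha>0$ and already agrees with the classical value on $(0,2)$, no new constant arises; the content of the extension is purely the uniform justification of the wedge trace expansion. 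Once this two-dimensional model calculation is in hand, the patchwork assembles into~\eqref{eq:heattracepsi} with error $O(t^q)$ locally uniform in $u$.
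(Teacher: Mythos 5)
Your overall strategy --- a patchwork split into interior, smooth-boundary and corner pieces, model kernels on each, Kac locality to glue, and an extension of the van den Berg--Srisatkunarajah/Carslaw wedge trace to $\alpha\geq 2$ --- matches the paper's. But there is a genuine gap in the corner step, and it is not cosmetic. You model the neighborhood of $p_j$ by the heat kernel on a \emph{straight} Euclidean wedge of opening $\alpha_j\pi$ and claim all discrepancies are $O(t^N)$ by Kac locality. That is not so: Definition~\ref{def:cpd} only requires the two boundary arcs to be smooth and to \emph{meet} at angle $\alpha_j\pi$, so the local domain $V_j$ is a curvilinear wedge, not a straight one, and the difference between $H_{(V_j,\cdot)}$ and $H_{(W,\cdot)}$ is a boundary-shape mismatch, not a locality error. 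Likewise the conformal factor $\sigma_{j,u}$ is not constant near $p_j$. Neither error is exponentially small on a fixed patch; both are $O(1)$ at unit scale. The paper handles this by working at a $t$-dependent scale $\varepsilon=\varepsilon(t)\to 0$: on $B_{2\varepsilon}$ it sandwiches $V_j$, by domain monotonicity, between two straight wedges $W^\pm_\varepsilon$ of opening $\alpha_j\pi+O(\varepsilon)$, freezes the conformal factor at the tip with error $O(\varepsilon)$ via a Duhamel estimate, and then balances $O(\varepsilon)$, $O(\varepsilon^2 t^{-1/2})$, $O(\varepsilon^3 t^{-1})$ and $O(t^{-1}e^{-c\varepsilon^2/t})$ by taking $\varepsilon=t^{(q+1)/3}$. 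This is exactly why the final error is only $O(t^q)$ for $q<1/2$ --- had all discrepancies been $O(t^N)$ as you claim, the theorem would give an arbitrarily good expansion, which is false (even the microlocal result of NRS only gives $O(t^{1/2}\log t)$).

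A related point: with a fixed partition of unity, the interior and boundary contributions also do not directly assemble into the stated integrals, because the boundary terms of the Minakshisundaram--Pleijel expansion live on a thin collar whose width must again be tied to $t$. The paper instead works in explicit boundary rectangles of height $c\varepsilon(t)$, uses the McKean--Singer Levi-parametrix expansion (the $H^0\sharp G$ term is where the $k_{g_u}$ integral comes from), and tracks the $O(\varepsilon)$ and $O(\varepsilon^2 t^{-1/2})$ errors from the finite rectangle explicitly. So the shrinking scale, the sandwich between nearby-angle wedges, and the Duhamel control of the conformal factor near the tip are the ingredients your proposal is missing; the wedge extension to $\alpha\geq 2$ you correctly identify as the other key point.
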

Above, $\Vol_g$ and $\ell_g$ denote the area and arc-length measures with respect to $g$ respectively, $K_g$ is the Gaussian curvature, $k_g$ the geodesic curvature, and $\partial_{n_g}$ the outer unit normal derivative.
\begin{remark}In the case where the interior angles satisfy $\alpha_j\pi\in(0,2\pi)$, this agrees with~\cite[Theorem 1.2]{NRS} and~\cite[Corollary 2.3]{AKR} (along with the computations of the coefficients in~\cite[Section 2.2]{AKR}), however in~\cite{NRS,AKR} the error is of order $t^{1/2}\log t$. Possibly one could, using the microlocal techniques of~\cite{NRS}, improve upon the error of Theorem~\ref{thm:heattrace}, but for the purpose of showing the Polyakov-Alvarez formula, the error $O(t^q)$ will suffice.
\end{remark}
It follows from Theorem~\ref{thm:heattrace} that the spectral $\zeta$-function corresponding to a curvilinear polygonal domain can be analytically continued to a neighborhood of $s=0$. See Section \ref{section:zdet} for details. Hence, the $\zeta$-regularized determinant can be defined, for such surfaces, by~\eqref{eq:zdet}. In Section~\ref{section:polalv} we prove the following Polyakov-Alvarez type formula using Theorem~\ref{thm:heattrace}.
\begin{theorem}\label{thm:polyakov}
Let $(M,g_0,(p_j),(\alpha_j))$ be a curvilinear polygonal domain, $\sigma\in C^\infty(M,g_0,(p_j),(\alpha_j))$, and define $g_u=e^{2 u\sigma}g_0$, for $u\in\R$. Then,
\begin{align}\begin{split}\partial_u\log\zdet\Delta_{(M,g_u)} =& -\frac{1}{6\pi}\int_M \sigma K_{g_u}d\Vol_{g_u}-\frac{1}{6\pi}\int_{\partial M}\sigma k_{g_u}d\ell_{g_u}\\ &-\frac{1}{4\pi}\partial_{n_{g_u}}\sigma d\ell_{g_u}-\frac{1}{12}\sum_{j=1}^n\frac{1-\alpha_j^2}{\alpha_j}\sigma(p_j).\end{split}\label{eq:polyakovdifferentiated}\end{align}
In particular,
\begin{align}\begin{split}\log\zdet\Delta_{(M,g_0)}-\log\zdet\Delta_{(M,g_1)}=&\frac{1}{12\pi}\int_M|\nabla_{g_0}\sigma|^2d\Vol_{g_0}+\frac{1}{6\pi}\int_M \sigma K_{g_0} d\Vol_{g_0}\\ &+\frac{1}{6\pi}\int_{\partial M} \sigma k_{g_0}d\ell_{g_0}+\frac{1}{4\pi}\int_{\partial M}\partial_{n_{g_0}}\sigma d\ell_{g_0}\\&+\frac{1}{12}\sum_{j=1}^n\frac{1-\alpha_j^2}{\alpha_j}\sigma(p_j).\end{split}\label{eq:polyakovintegrated}\end{align}
\end{theorem}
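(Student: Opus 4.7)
The plan is to follow the classical Polyakov--Alvarez strategy, using Theorem~\ref{thm:heattrace} with the non-constant test function $\psi=\sigma$ in place of the unweighted ($\psi\equiv 1$) heat-trace expansion, and then integrating in $u$ to obtain the integrated form. The starting point is the heat-trace variation identity
$$\partial_u\Tr(e^{-t\Delta_{(M,g_u)}})=-2t\,\partial_t G(t,u),\qquad G(t,u):=\int_M\sigma(p)\,H_{(M,g_u)}(t;p,p)\,d\Vol_{g_u},$$
which I would derive using the two-dimensional conformal identity $\Delta_{(M,g_u)}=e^{-2u\sigma}\Delta_{(M,g_0)}$: this recasts the eigenvalue problem as $\Delta_{(M,g_0)}\phi_n=\lambda_n e^{2u\sigma}\phi_n$ with $\phi_n$ orthonormal in $L^2(d\Vol_{g_u})$; differentiating in $u$ and pairing with $\phi_n$ in $L^2(d\Vol_{g_0})$ (using self-adjointness of $\Delta_{(M,g_0)}$ under Dirichlet conditions) yields the Hadamard-type formula $\dot\lambda_n(u)=-2\lambda_n(u)\int_M\sigma\phi_n^2\,d\Vol_{g_u}$, and summing against $e^{-t\lambda_n}$ gives the variation identity. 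Inserting this into the Mellin representation $\zeta_{(M,g_u)}(s)=\Gamma(s)^{-1}\int_0^\infty t^{s-1}\Tr(e^{-t\Delta_{(M,g_u)}})\,dt$ and integrating by parts in $t$ (boundary terms vanish for $\Re s>1$) yields $\partial_u\zeta_{(M,g_u)}(s)=\frac{2s}{\Gamma(s)}\int_0^\infty t^{s-1}G(t,u)\,dt$. Theorem~\ref{thm:heattrace} with $\psi=\sigma$ gives the short-time expansion $G(t,u)=a_{-1}(u)/t+a_{-1/2}(u)/\sqrt t+a_0(u)+O(t^q)$, where $a_0(u)$ is the constant term read off from~\eqref{eq:heattracepsi}. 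Splitting the Mellin integral at $t=1$ shows it extends meromorphically to $\Re s>-q$ with residue $a_0(u)$ at $s=0$; since $2s/\Gamma(s)=2s^2+O(s^3)$ near $s=0$, we obtain $\partial_u\zeta'_{(M,g_u)}(0)=2a_0(u)$ and $\partial_u\log\zdet\Delta_{(M,g_u)}=-2a_0(u)$, which is precisely~\eqref{eq:polyakovdifferentiated}.

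For~\eqref{eq:polyakovintegrated} I would integrate $-\partial_u\log\zdet\Delta_{(M,g_u)}=2a_0(u)$ from $u=0$ to $u=1$ using the two-dimensional conformal transformation rules on $M\setminus\{p_j\}$: with the positive-Laplacian sign convention, $K_{g_u}d\Vol_{g_u}=(K_{g_0}+u\Delta_{(M,g_0)}\sigma)\,d\Vol_{g_0}$ and $k_{g_u}d\ell_{g_u}=(k_{g_0}+u\partial_{n_{g_0}}\sigma)\,d\ell_{g_0}$, while $\partial_{n_{g_u}}\sigma\,d\ell_{g_u}=\partial_{n_{g_0}}\sigma\,d\ell_{g_0}$ is conformally invariant and the corner data $(\alpha_j,\sigma(p_j))$ is $u$-independent. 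After integration in $u$, the bulk curvature term contributes $\tfrac{1}{6\pi}\int_M\sigma K_{g_0}\,d\Vol_{g_0}+\tfrac{1}{12\pi}\int_M\sigma\Delta_{(M,g_0)}\sigma\,d\Vol_{g_0}$; Green's identity $\int_M\sigma\Delta_{(M,g_0)}\sigma\,d\Vol_{g_0}=\int_M|\nabla_{g_0}\sigma|^2\,d\Vol_{g_0}-\int_{\partial M}\sigma\,\partial_{n_{g_0}}\sigma\,d\ell_{g_0}$ then produces the $|\nabla_{g_0}\sigma|^2$ term with the required coefficient $+\tfrac{1}{12\pi}$, and the two $\mp\tfrac{1}{12\pi}\int_{\partial M}\sigma\,\partial_{n_{g_0}}\sigma\,d\ell_{g_0}$ boundary contributions (one from this Green identity, one from the $k_{g_u}$ transformation) cancel exactly, yielding~\eqref{eq:polyakovintegrated}.

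The main technical point is justifying the eigenvalue variation and commuting $\partial_u$ with the Mellin integral in the corner-singular setting. Smoothness of $u\mapsto\lambda_n(u)$ I would get from Kato's perturbation theory applied to the unitarily equivalent family $U_u\Delta_{(M,g_u)}U_u^{-1}$ on the fixed Hilbert space $L^2(d\Vol_{g_0})$, where $U_uf:=e^{u\sigma}f$: this is a smooth family of self-adjoint operators with a common form domain, since the conformal multiplier is smooth and uniformly positive and the Friedrichs-extension behavior at each corner is controlled by the angle $\alpha_j$, which is itself a conformal invariant. The interchange of $\partial_u$ with the Mellin integral is then controlled by the locally-uniform-in-$u$ error $O(t^q)$ in Theorem~\ref{thm:heattrace} for small $t$, together with uniform exponential decay of the heat trace for large $t$ (from a locally uniform positive lower bound on $\lambda_1(u)$).
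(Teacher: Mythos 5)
Your proof is correct and arrives at the same formula, but it takes a genuinely different route to the key variation identity $\partial_u\Tr(e^{-t\Delta_u})=-2t\,\partial_t G(t,u)$. You derive it from an eigenvalue perturbation argument: unitarily conjugate $\Delta_{g_u}$ to a smooth family on a fixed Hilbert space, apply Kato's theory to get smooth eigenbranches, compute the Hadamard-type formula $\dot\lambda_n=-2\lambda_n\int_M\sigma\phi_n^2\,d\Vol_{g_u}$, and sum against $e^{-t\lambda_n}$. The paper instead proves the equivalent integrated statement, $\partial_u\int_\vare^\infty t^{-1}\Tr(e^{-t\Delta_u})\,dt = 2\Tr(\sigma e^{-\vare\Delta_u})$, directly at the level of the heat kernel via Duhamel's principle together with the global Gaussian bounds~\eqref{eq:gaussian} and the Davies bound~\eqref{eq:Davies} on $\partial_t H_u$; this sidesteps eigenvalue-branch labeling and any domain-perturbation discussion. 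What each route buys: your Kato/Hadamard argument is conceptually familiar and gives an explicit formula for the eigenvalue flow (and, as you observe, the two-dimensional conformal invariance of the Dirichlet form $\int|\nabla_{g_u}f|^2\,d\Vol_{g_u}=\int|\nabla_{g_0}f|^2\,d\Vol_{g_0}$ means the quadratic form really is $u$-independent, which is the cleanest way to see the common form domain); but to make it rigorous one must address eigenvalue crossings and the identification of $\dom(\Delta_u)$ for the Friedrichs extension in the corner-singular setting, which requires some care even though the conclusion is insensitive to relabeling. The paper's Duhamel route avoids all spectral bookkeeping and instead relies on heat kernel estimates that are already established in Section 2.3. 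Your subsequent Mellin-transform computation (integration by parts in $t$, picking up the factor $2s/\Gamma(s)=2s^2+O(s^3)$ against the pole $a_0(u,\sigma)/s$) is a standard alternative to the paper's $F_\vare(u)$ regularization and is correct; both rely on the locally-uniform-in-$u$ error in Theorem~\ref{thm:heattrace} applied with $\psi=\sigma$. The passage from~\eqref{eq:polyakovdifferentiated} to~\eqref{eq:polyakovintegrated}, including the exact cancellation of the two $\pm\tfrac{1}{12\pi}\int_{\partial M}\sigma\,\partial_{n_{g_0}}\sigma\,d\ell_{g_0}$ boundary terms, matches the paper's computation.
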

\subsection{Motivation} 
The Gaussian free field (GFF) with Dirichlet boundary condition is the Gaussian random distribution whose correlation function is the Green's function for the Dirichlet Laplacian on a region of the complex plane. In recent years, links between the $\zeta$-regularized determinant of the Laplacian and objects from random conformal geometry, in particular those closely related to the GFF, have been established. The heuristic reason for this is that $\zdet\Delta$ may be interpreted as the partition function of the GFF, see, e.g.,~\cite{D09}. (On the discrete level one may see this is by noticing that the GFF has a discrete variant which has the partition function $(\det\Delta_{\text{graph}})^{-1/2}$, where $\Delta_{\text{graph}}$ is the graph Laplacian.) In~\cite{APPS}, connections between Liouville quantum gravity surfaces (which are closely related to the GFF), Brownian loop measure, and $\zeta$-regularized determinants of Laplacians are investigated. See also~\cite{PPS}.\par
Another object from random conformal geometry related to $\zeta$-regularized determinants of Laplacians is the Loewner energy. The Loewner energy is a functional on (deterministic) chords (that is, a simple curve connecting two distinct boundary points) in a simply connected domain. The Loewner energy was first introduced as a large deviation rate function for the family of random curves SLE$_\kappa$ as $\kappa\to 0+$, but has also been found to have strong ties to Teichmüller theory~\cite{W19a,W19b}. The Loewner energy $I(\gamma)$ of a smooth chord $\gamma$ from $-1$ to $1$ in the unit disk $\D=\{z\in\C:|z|<1\}$ relates to determinants of Laplacians through the Loewner potential
$$\mathcal H(\gamma)=\log\zdet \Delta_{(\D,dz)}-\log\zdet\Delta_{(D_1,dz)}-\log\zdet\Delta_{(D_2,dz)},$$
where $D_1$ and $D_2$ are the connected components of $\D\setminus\gamma$ and $dz$ denotes the Euclidean metric, by
\begin{equation}I(\gamma)=12(\mathcal H(\gamma)-\mathcal H([-1,1]))\label{eq:loewner}\end{equation}
provided that $\gamma$ meets $\partial \D$ orthogonally at $-1$ and $1$~\cite{PW}. In a companion paper, we study a variant of the Loewner energy for slits and prove, using Theorem~\ref{thm:polyakov}, a formula analogous to~\eqref{eq:loewner}~\cite{Krusell}. This is an example of a natural occurrence of slit domains, where $\zeta$-regularized determinants can be considered, and serves as motivation for the present article. 
\subsection*{Acknowledgments}
I would like to thank Julie Rowlett for discussions about the project and her related work. I also thank Fredrik Viklund for suggesting the project, many discussions, and for his comments on the draft. Finally, I would like to thank my colleague Andrés Franco Grisales for helpful conversations about differential geometry. This work was supported by a grant from the Knut and Alice Wallenberg foundation. 
\section{Preliminaries}
\subsection{Curvilinear polygonal domains}\label{section:cpd}
Recall that a smooth surface with boundary $M=M^\circ\cup\partial M$ is a topological space which is Hausdorff, second countable, and locally homeomorphic to the closed upper-half plane $\overline\H=\{x=(x_1,x_2)\in\R^2:x_1\geq 0\}$. By the latter we mean that there, for every $p\in M$, is a chart $(U,\varphi)$ consisting of an open neighborhood of $U\ni p$ and a homeomorphism $\varphi:U\to V$ where $V\subset\overline\H$ is open (if $p\in M^\circ$ then we will have $\Im \varphi(p)>0$ and if $p\in\partial M$ then $\Im \varphi(p)=0$). We call a family of charts $\mathcal A=\{(U_\beta,\varphi_\beta)\}_{\beta}$ an atlas if $\cup U_\beta= M$. A smooth structure on $M$ is an atlas $\mathcal A$ of smoothly compatible charts. That is, if $(U_1,\varphi_1),(U_2,\varphi_2)\in \mathcal A$, then $\varphi_2\circ \varphi_1^{-1}\in C^\infty(\varphi_1(U_1\cap U_2))$.\par 
We let $M_0$ denote the topological space $\{(r,\theta):r>0,\theta\in\R\}\subset \R^2$. We endow $M_0$ with (a smooth structure and) a Riemannian metric by declaring that 
$$\pi:M_0\to \C\setminus\{0\}\cong \R^2\setminus\{(0,0)\}$$
$$(r,\theta)\mapsto re^{i\theta}$$ is a local isometry and we refer to this metric on $M_0$ as the Euclidean metric. We will denote the Euclidean metric on $\R^2,\ \C$ by $dx,\ dz$ etc. In order to simplify notation we will often treat $M_0$ as $\C$ (or $\R^2$) without explicitly using the map $\pi$. E.g., we will slightly abuse notation and write $dz$ for $(\pi)^\ast dz$ on $M_0$. 
We let $\hat M_0 = M_0\cup\{0\}$, be the topological space obtained by declaring
$$B_{r}:=B_{\hat M_{0}}(0,r):=\{0\}\cup\{(\rho,\theta):\rho< r\}$$
to be open for all $r>0$. The map $\pi$ is extended to $0$ by $\pi(0)=0$. 
We will also use the notation $|(r,\theta)|:=r$ and $\arg(r,\theta):=\theta$ for $(r,\theta)\in M_0$.
\begin{definition}\label{def:cpd}Let $M$ be a compact surface with boundary $\partial M\neq \varnothing$, with finitely many (distinct) marked points $p_1,...,p_n\in\partial M$, a smooth structure and a smooth Riemannian metric $g$ on $M\setminus\{p_1,...,p_n\}$. We say that $(M,g,(p_j),(\alpha_j))$, where $\alpha_1,...,\alpha_n>0$, is a curvilinear polygonal domain, if there exists, for each $j=1,...,n$ an open neighborhood $U_j\ni p_j$ and a homeomorphism $\varphi_j:U_j\to V_j \subset\hat M_0$, with $\varphi_j(p_j)=0$ satisfying the following:
\begin{enumerate}[label=(\roman*)]
\item Let $\gamma:(a,b)\to \hat M_0$ be the arc-length parametrization of $\varphi_j(\partial M\cap U_j)$ which is positively oriented and satisfies $\gamma(0)=0$. Then $\pi(\gamma|_{(a,0]})$ and $\pi(\gamma|_{[0,b)})$ are smooth and
$$\alpha_j\pi=\lim_{t\to 0-}\arg\gamma(t)-\lim_{t\to 0+}\arg\gamma(t),$$
so that $V_j$ has an interior angle $\alpha_j\pi$ at $0$ (with respect to the Euclidean metric). 
\item The pull-back $(\varphi_j^{-1})^\ast g$ can be expressed as $e^{2\sigma_j}dz$ where $\sigma_j\in C^\infty(V_j^\circ)$ and all partial derivatives of $\sigma_j$ extend continuously to $\partial V_j$.
\end{enumerate}
\end{definition}
\begin{remark}\label{rmk:differentdefs}
First of all we remark that, if $\alpha_j<2$ for all $j$, then we may replace the usage of $\hat M_0$ by $\C$ (or $\R^2$). In~\cite{NRS}, a curvilinear polygonal domain is defined slightly differently: there, it is a compact subset of a smooth Riemannian surface, which has piecewise smooth boundary where the corner angles lie in the open interval $(0,2\pi)$. A surface which is a curvilinear polygonal domain in the sense of~\cite{NRS} is trivially a curvilinear polygonal domain in the sense of Definition~\ref{def:cpd}. Conversely, a curvilinear polygonal domain in the sense of Definition~\ref{def:cpd}, with $\alpha_j<2$ for all $j$, can be made into a curvilinear polygonal domain in the sense of~\cite{NRS} by constructing a slightly larger Riemannian surface with smooth boundary $\tilde M$: As mentioned above we can replace the usage of $\hat M_0$ with $\C$ when $\alpha_j<2$. The surface $M$ can be extended close to the corners by (abstractly) extending the patches $\varphi_j:U_j\to V_j\subset \C$ to patches $\tilde \varphi_j:\tilde U_j\to \tilde V_j\subset \C$ such that $\tilde V_j$ has smooth boundary. This yields a surface $\tilde M$ with smooth boundary. We may then extend the metric $g$ to $\tilde M\setminus M$ by in each coordinate $\tilde V_j$ extending $\sigma_j$ to $\tilde \sigma_j\in C^\infty(\tilde V_j)$ (such an extension is possible by condition (ii) of Definition \ref{def:cpd} and the regularity of $\partial M$, see, e.g., \cite[Theorem VI.5]{S70}).
\end{remark}
\begin{definition}\label{def:smooth}Let $(M,g,(p_j),(\alpha_j))$ be a curvilinear polygonal domain. We say that $\psi:M\to \R$ is smooth, $\psi\in C^\infty(M,g,(p_j),(\alpha_j))$, if $\psi\in C^\infty(M\setminus\{p_1,...,p_n\})$ and if there is a choice of $(\varphi_j,U_j)$ as in Definition \ref{def:cpd} such that all partial derivatives of $\psi\circ\varphi_j^{-1}$ extend continuously to $\partial V_j$. 
\end{definition}
\begin{remark}
It follows directly from Definitions~\ref{def:cpd} and~\ref{def:smooth} that, if $(M,g,(p_j),(\alpha_j))$ is a curvilinear polygonal domain and $\psi\in C^\infty(M,g,(p_j),(\alpha_j)),$ then $(M,e^{2\psi}g,(p_j),(\alpha_j))$ is also a curvilinear polygonal domain, since, in the notation of Definition~\ref{def:cpd} $$(\varphi^{-1}_j)^\ast e^{2\psi}g = e^{2(\psi(\varphi_j^{-1})+\sigma_j)}dz.$$
\end{remark}
In the proof of Theorem \ref{thm:heattrace} we will use that, for each corner $p_j$ of a curvilinear polygonal domain $(M,g,(p_j),(\alpha_j))$ one can construct a smoothly bounded Jordan domain $V_{j,1}\subset \hat M_0$  with the following properties. Let $\Gamma_1$ denote the positively oriented boundary of $V_{j,1}$. Then we may choose $V_{j,1}$ so that $\Gamma_1$ is a smooth and closed extension of $\gamma_{(a,0]}$, see Figure \ref{fig:Vjk}.
\begin{figure}
\centering
\includegraphics[width=0.35\linewidth]{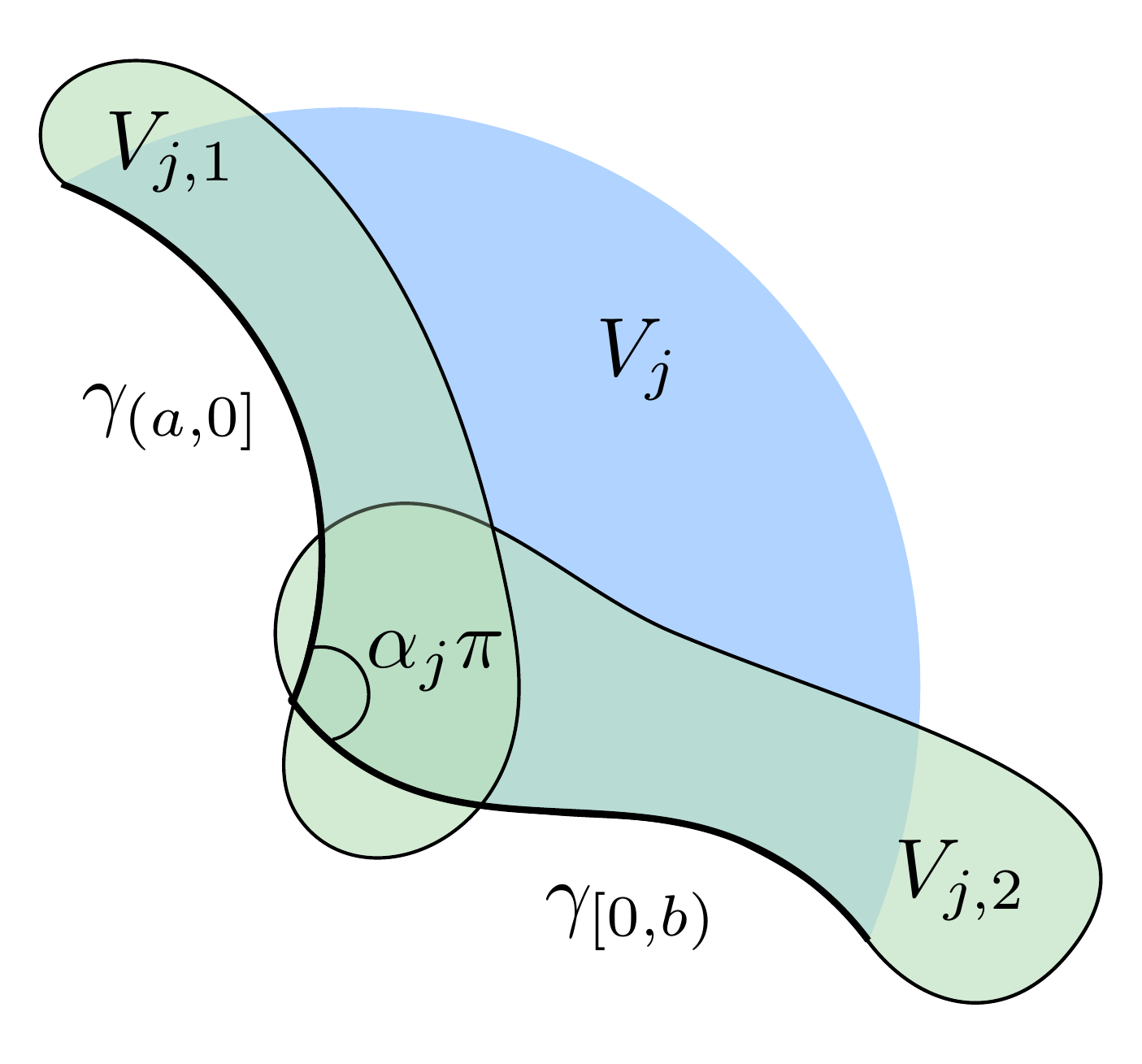}
\caption{Illustration of $V_{j,1},V_{j,2}\subset \hat M_0$.\label{fig:Vjk}}
\end{figure}
 If $\overline V_{j,1}\subset V_j$ then we have directly that $\sigma_j|_{\overline V_{j,1}}\in C^\infty(\overline V_{j,1})$. If not, then $\sigma_j$ can be extended to $\overline V_{j,1}$ so that $\sigma_j|_{\overline V_{j,1}}\in C^\infty(\overline V_{j,1})$. Similarly, if $\psi\in C^\infty(M,g,(p_j),(\alpha_j))$ then $\psi\circ\varphi_j^{-1}$ can be extended to $\overline V_{j,1}$ so that $\psi\circ\varphi_j^{-1}|_{V_{j,1}}\in C^\infty(V_{j,1})$. In a similar manner, one can construct $V_{j,2}$ which has $\gamma_{[0,b)}\subset\partial V_{j,2}$ and both $\sigma_j$ and $\psi\circ\varphi_j^{-1}$ can be extended to be smooth on $\overline V_{j,2}$ (note however, that the extensions may not agree on $\overline V_{j,1}\cap \overline V_{j,2}$). 
\subsection{The Laplace-Beltrami operator and Brownian motion}\label{section:laplacian}
In this section, we recall some basics about the Laplace-Beltrami operator and Brownian motion. The reader is referred to~\cite[Chapter I and VII]{Chavel} and~\cite[Chapter V]{Ikeda} for detailed treatments of the Laplace-Beltrami operator, heat kernel, and Brownian motion on Riemannian manifolds.\par
Let $(M,g,(p_j),(\alpha_j))$ be a curvilinear polygonal domain. Let $C^\infty_c(M^\circ)$ denote the space of smooth functions with compact support on $M^\circ$, and $C^\infty_0(M)$ denote the space of smooth functions $f$ on $M^\circ$ which can be continuously extended to $\partial M$ with $f|_{\partial M}\equiv 0$. We consider $C^\infty_c(M^\circ)$ as a subset of $L^2(M,\Vol_g)$ and denote the $L^2$-inner product by $\langle\cdot,\cdot\rangle$. The Dirichlet Laplace-Beltrami operator is, a priori, defined on $C^\infty_c(M^\circ)$ (in local coordinates) by
$$\Delta_{(M,g)}=-\sum_{i,j=1}^2\frac{1}{\sqrt{\det g}}\partial_i g^{ij}\sqrt{\det g}\partial_i.$$
Here $(g^{ij})$ denotes the inverse of $(g_{ij})$. 
With this sign convention $\Delta_{(M,g)}$ is symmetric and positive definite. Since $C^\infty_c(M^\circ)$ is dense in $L^2(M,\Vol_g)$ this allows us to take the Friedrichs extension of $\Delta_{(M,g)}$, which we for convenience also denote by $\Delta_{(M,g)}$, so that $\Delta_{(M,g)}$ becomes self-adjoint.
\par
Brownian motion on $(M,g)$, stopped upon hitting $\partial M$ is the Markov process with $-\Delta_{(M,g)}$ as its infinitesimal generator. The transition density function of Brownian motion is the heat kernel, $H_{(M,g)}(t;p,q)\in C^\infty((0,\infty)\times M^\circ\times M^\circ)$.
\begin{remark}Typically, Brownian motion is defined to be the Markov process with $-\tfrac{1}{2}\Delta_{(M,g)}$ as its infinitesimal generator, so that the transition density function is 
$$P(t;p,q)=H_{(M,g)}(t/2;p,q)$$ with $H_{(M,g)}$ as above. We use the convention above for convenience.  
\end{remark}
The heat kernel is the fundamental solution to the Dirichlet heat equation, that is, for every $f\in C(M)$ we have that 
$$u(t,p)=\int_{M}H_{(M,g)}(t;p,q)f(q)d\Vol_g(q)\in C^\infty( (0,\infty)\times M^\circ\times M^\circ)$$
solves
\begin{equation}\label{eq:intitialdirichletheat}\begin{cases}
\partial_t u(t,p) = -\Delta_{(M,g)} u(t,p),&\ t>0,\ p\in M^\circ,\\
u(t,p) = 0,&\ t>0,\ p\in\partial M,\\
\lim_{t\to 0+}u(0,p) = f(p),&\ p\in M^\circ.
\end{cases}\end{equation}
If we define $P_t:C(M)\to C^\infty_0(M)$ by $P_t f:=u(t,\cdot)$, then the family $(P_t)_{t}$ forms a semi-group with $P_{t_1}P_{t_2}=P_{t_1+t_2}$. From this, we deduce that
$$\langle f,P_t f\rangle = \langle P_{t/2}f,P_{t/2} f\rangle \geq 0.$$
Thus, $P_t$ is positive semi-definite. Since $\int_{M}H_{(M,g)}(t;p,p)d\Vol_g<\infty$, the operator $P_t$ can be extended to a continuous, positive definite, compact, and self-adjoint operator on $L^2(M,\Vol_g)$. From the spectral theorem and the semigroup property it then follows that there is an orthonormal set of eigenfunctions $(\phi_n)_{n\geq 1}\in C^\infty_0(M)$ and corresponding eigenvalues $(e^{-t\lambda_n})_{n\geq 1}$ where $$0<\lambda_1\leq \lambda_2\leq ...,\ \lim_{n\to\infty}{\lambda_n}=\infty.$$
The heat kernel therefore has the representation
$$H_{(M,g)}(t;p,q)=\sum_{n\geq 1}e^{-t\lambda_{n}}\phi_n(p)\phi_n(q).$$
Furthermore, it can be deduced from~\eqref{eq:intitialdirichletheat} that $\Delta_{(M,g)}\phi_n=\lambda_n\phi_n$. We can therefore identify $P_t = e^{-t\Delta_{(M,g)}}$.
The trace of the heat kernel, the heat trace, is 
$$\int_M H_{(M,g)}(t;p,p)d\Vol_g = \Tr(e^{-t\Delta_{(M,g)}})=\sum_{n\geq 1}e^{-t\lambda_n}.$$
Finally, we recall Weyl's law (\cite{Weyl},~\cite[Section VII.3]{Chavel})
\begin{equation}\lambda_n\sim \frac{4\pi n}{\Vol_g(M)},\quad n\to\infty.\label{eq:weyl}\end{equation}
\subsection{Basic properties of the heat kernel}
In this section, we recall a few basic properties of the heat kernel. Here $c_1,c_2,...$ will denote positive constants.\par
\paragraph{Domain monotnicity.}Suppose $(M,g)$ is a Riemannian surface and $U\subset M$ open. Then 
\begin{equation}H_{(M,g)}(t;p,q)\geq H_{(U,g)}(t;p,q),\ x,y\in U,\label{eq:monotonicity}\end{equation}
since the density at $q$ of a Brownian motion on $(M,g)$ started at $p$ and run for time $t$ will become smaller if it is stopped upon exiting $U$. See, e.g.,~\cite[Chapter VIII]{Chavel}. Using the convention $H_{(U,g)}(t;p,q)=0$ if $(p,q)\notin U\times U$,~\eqref{eq:monotonicity} holds for all $(p,q)\in M\times M$.
\paragraph{Local Gaussian bounds.}If $(M,g)$ is a smooth Riemannian surface with boundary, $(U,\varphi)$ is a smooth coordinate, and $K\subset U$ is compactly contained in $U$, then there exists constants $c_1$ and $c_2$ such that
$$H_{(M,g)}(t;\varphi^{-1}(x),\varphi^{-1}(y))\leq \frac{c_1}{t}e^{-c_2|x-y|^2/t}, $$
for sufficiently small $t$ and $x,y\in K$ (see~\cite[Equation (3.6)]{McKeanSinger}). Similarly, consider the heat kernel on $(M_0,dz)$. Since Brownian motion on $M_0$ can be constructed by simply lifting Brownian motion on $\C\setminus\{0\}$ by $\pi:M_0\to\C$, we have
$$H_{(M_0,dz)}(t;z,w)\leq H_{(\C,dz)}(t;\pi(z),\pi(w))=\frac{1}{4\pi t}e^{-|\pi(z)-\pi(w)|^2/4t}.$$
If $z,w\in M_0$ are such that $\dist_{M_0}(z,w)>|\pi(z)-\pi(w)|$ then the shortest path on $\hat M_0$ from $z$ to $w$ is the broken line-segment through $0$ (so that $\dist_{M_0}(z,w)=|z|+|w|$). In this case, we obtain
$$H_{(M_0,dz)}(t;z,w)\leq \min\Big(H_{\C,dz}(t;\pi(z),0),H_{\C,dz}(t;\pi(z),0)\Big)\leq \frac{1}{4\pi t}e^{-\dist_{dz}(z,w)^2/t}.$$
As a conformal scaling $e^{2\sigma}dz$, for $\sigma$ smooth and bounded, simply alters the time-parametrization of the Brownian motion we have
$$H_{(M_0,e^{2\sigma}dz)}(t;z,w)\leq  \frac{c_3}{t}e^{-c_4\dist_{e^{2\sigma}dz}(z,w)^2/t},$$
in that case as well. 
\paragraph{Kac's locality principle.} The short time behavior of the heat kernel, $H_{(M,g)}(t;p,q)$, is governed, if $p$ and $q$ are close, by the geometry of $M$ close to $p$. Heuristically, one can argue that a Brownian motion started at $p$ (or more precisely a Brownian bridge from $p$ to $q$) is unlikely to exit a fixed neighborhood $U$ of $p$ and $q$ within a small time $t$. Therefore, the Brownian motion does not ``feel'' the geometry outside $U$ and hence 
$$H_{(M,g)}(t;p,q)\sim H_{(U,g)}(t;p,q)\quad \text{as }\quad t\to 0+.$$
We now make this precise. Let $(M,g,(p_j),(\alpha_j))$ be a curvilinear polygonal domain and fix an open subset $U\subset M$. For $p,q\in U$ 
$$H_{(M,g)}(t;p,q)-H_{(U,g)}(t;p,q)=\lim_{\delta\to 0+}\frac{\P^{p}_{(M,g)}[\tau_{U}<t<\tau_{M},\ B_t\in B(q,\delta)]}{\Vol_{g}(B(q,\delta))},$$
where $\P^p_{(M,g)}$ is the law of a Brownian motion on $(M,g)$ started at $p$ and stopped at $$\tau_{M}= \inf\{t:B^p_t\in\partial M\}.$$ Let $\tilde U$ be open and compactly contained in $U$. Then, $\dist_g(\tilde U,M\setminus U)=\vare>0$.
Define, $\tau_1=\tau_{U}$, and for $n\geq 1$
$$\sigma_{n}=\inf\{t>\tau_{n}:\dist_g(B_t,\tilde U)=\vare/2\}\quad\text{ and }\quad\tau_{n+1}=\inf\{t>\sigma_{n}:B_t\in\partial U\}.$$
By the strong Markov property and the local Gaussian bounds
\begin{align*}
&\lim_{\delta\to 0+}\frac{\P_{(M,g)}^p[\tau_{U}<t<\tau_M:B_t\in B(q,\delta)]}{\Vol_g(B(q,\delta))}\\
=&\lim_{\delta\to 0+}\frac{\sum_{n\geq 1}\P_{(M,g)}^p[\sigma_n<t<\tau_{n+1}:B_t\in B(q,\delta)]}{\Vol_g(B(q,\delta))}
\\
=&\sum_{n\geq 1}\int_{\{\sigma_n<t \}}H_{(U,g)}(s;r,q)d\P_{(M,g)}^p[B_{\sigma_{n}}=r,\ \sigma_n=t-s]\\
\leq & \frac{c_{5}}{t}e^{-c_{6}\vare^2/4t}\sum_{n\geq 1}\P^x_{(M,g)}[\sigma_n<t].
\end{align*}
On the event $\sigma_n<t$ the Brownian motion has travelled, back from $\partial U$ to $\{r:\dist_g(r,\tilde U)=\vare/2\}$, $n$ times. Since the probability that a Brownian motion exits a ball of radius $\vare/2$ within time $t$ can be bounded above by $c_{7}t/\vare^2$, we find that, for $p,q\in\tilde U$
$$H_{(M,g)}(t;p,q)-H_{(U;g)}(t;p,q)\leq c_8t^{-1}e^{-c_{9}\vare^2/t}$$
for sufficiently small $t$ and $\vare^2>c_10t$. By a similar argument, one finds 
$$H_{(M,g)}(t;p,q)\leq c_8t^{-1}e^{-c_{9}\vare^2/t}$$
for sufficiently small $t$ and $\vare^2>c_10t$, when $p\in \tilde U$ and $q\notin U.$
\paragraph{Global Gaussian bounds.}
By combining the local Gaussian bounds and the locality principle one can obtain global Gaussian bounds in a curvilinear polygonal domain $(M,g,(p_j),(\alpha_j))$: for sufficiently small $t$ and all $p,q\in M$
\begin{equation}H_{(M,g)}(t;p,q)\leq\frac{c_{10}}{t}e^{-c_{11} \dist_g(p,q)^2/t}.\label{eq:gaussian}\end{equation}
In~\cite{Davies}, the author provides bounds on the time derivatives of the heat kernel, given (local) bounds on the heat kernel itself. By a direct application of~\cite[Corollary 5]{Davies}, we obtain
\begin{equation}\partial_t H_{(M,g)}(t;p,q)=\frac{c_{12}}{t^2}e^{-c_{13} \dist_g(p,q)^2/t},\label{eq:Davies}\end{equation}
for all $p,q\in M^\circ$ and sufficiently small $t$.
\subsection{The McKean-Singer construction}\label{section:mckeansinger}
In~\cite{McKeanSinger}, the heat kernel of a smooth Riemannian manifold with boundary represented by a series. As the proof of Theorem~\ref{thm:heattrace} relies heavily on estimates from \cite{McKeanSinger} of the heat kernel, obtained via the series representation, we briefly summarize the set-up here. Consider $\R^2$ endowed with a smooth metric $g_{ij}$ such that $g_{ij}(x)=\delta_{ij}$ for $|x|$ large. Write the Laplace-Beltrami operator as $\Delta_{(\R^2,g)}=\sum_{i,j=1}^2a_{ij}\partial_i\partial_j+\sum_{i=1}^n b_i\partial_i$, that is
$$a_{ij}=-g^{ij},\quad b_i=-\frac{1}{\sqrt{\det g}}\sum_{j=1}^2 \partial_j(g^{ij}\sqrt{\det g}).$$
Further, denote by $Q_{x_0}=\sum_{i,j=1}^2a_{ij}(x_0)\partial_i\partial_j+\sum_{i=1}^n b_i(x_0)\partial_i$, that is, the differential operator obtained by fixing the coefficients of $\Delta_{(\R^2,g)}$ at $x_0$. The (minimal) fundamental solution to $\partial_t u = -Q_{x_0} u$ is
$$H_{x_0}(t;x,y)=\frac{1}{4\pi t}\exp\bigg(-\sum_{i,j=1}^2 g_{ij}(x_0)(y_i-x_i+b_i(x_0)t)(y_j-x_j+b_j(x_0)t)/4t\bigg).$$ 
Denote by $H^0(t;x,y)=H_{y}(t;x,y)$. Then Duhamel's principle gives 
\begin{align*}H_{(\R^2,g)}(t;x,y)-&H^0(t;x,y)\\ = \int_0^t & \partial_s \int_{M}H_{(\R^2,g)}(s;x,z) H^0(t-s;z,y)\sqrt{\det g}dz ds\\
=\int_0^t &\int_{M} H_{(M,g)}(s;x,z)(Q_{y,z}-\Delta_{(\R^2,g),z})H^0(t-s;z,y)\sqrt{\det g}dzds,
\end{align*}
where the subscript $z$ on the final line indicates differentiation with respect to $z$. 
Using the notation
\begin{align*}&f\sharp g (t;x,y):=\int_{0}^t\int_M f(s;x,z)g(t-s;z,y)d\Vol_{g}(z)ds\\
& f\sharp_0 g:= f,\qquad f \sharp_n g := (f\sharp_{n-1} g)\sharp g,\ n=1,2,3,..., 
\end{align*}
and defining
$$G(t;x,y):=(Q_{y,x}-\Delta_{(\R^2,g),x})H^0(t;x,y)$$
the above can be expressed as
\begin{align}H_{(\R^2,g)} = H^0 + H_{(\R^2,g)}\sharp G.\label{eq:duhamel}\end{align}
By iterating, we find the formal representation 
\begin{equation}H_{(\R^2,g)} = \sum_{n=0}^\infty H^0\sharp_n G.\label{eq:levisum}\end{equation}
Using that 
$$H^0(t;x,y)\leq \frac{c_1}{t}e^{-c_2|x-y|^2/t},\qquad |G(t;x,y)|\leq \frac{c_3}{t^{3/2}}e^{-c_4|x-y|^2/t}$$ an explicit computation shows
$$|H^0\sharp_n G (t;x,y)|\leq c_5^n[(n/2)!]^{-1}t^{n/2-1}e^{-c_6|x-y|^2/t}.$$
Hence, the right-hand side of~\eqref{eq:levisum} converges and as a result~\eqref{eq:levisum} holds. 
\subsection{The \texorpdfstring{$\zeta$}{}-regularized determinant of the Laplacian}\label{section:zdet}
Let $(M,g,(p_i),(\alpha_i))$ be a curvilinear polygonal domain and consider the Friedrichs extension of the Dirichlet Laplace-Beltrami operator on $M$, $\Delta_{(M,g)}$. Then, as we saw in Section~\ref{section:laplacian}, there is an orthonormal basis of $\dom(\Delta_{(M,g)})$ of eigenfunctions of $\Delta_{(M,g)}$ with corresponding eigenvalues 
$$0<\lambda_1\leq \lambda_2\leq ...$$
satisfying Weyl's law~\eqref{eq:weyl}. We define,
$$\zeta_{(M,g)}(s)=\sum_{n\geq 1}\lambda_n^{-s},\quad \Re s>1,$$
where the right-hand side converges by Weyl's law. Following Ray and Singer \cite{RaySinger}, we express the spectral $\zeta$-function using the heat trace
$$\zeta_{(M,g)}(s)=\frac{1}{\Gamma(s)}\int_0^\infty t^{s-1}\Tr(e^{-t\Delta_{(M,g)}})dt.$$
The asymptotic expansion of the heat trace
\begin{equation}\Tr(e^{-t\Delta_{M,g}}) = a_0 t^{-1}+a_{-1/2}t^{-1/2}+a_0+ O(t^{q}),\quad \text{as }t\to 0+,\label{eq:heattraceanonym}\end{equation}
from Theorem~\ref{thm:heattrace}, can then be used to analytically continue $\zeta_{(M,g)}$. For $\Re s>1$, we have
\begin{align*}
\zeta_{(M,g)}(s)=&\frac{1}{\Gamma(s)}\int_0^\infty t^{s-1}\Tr(e^{-t\Delta_{(M,g)}})dt\\
=&\frac{1}{\Gamma(s)}\bigg(\frac{a_{-1}}{s-1}+\frac{a_{-1/2}}{s-1/2}+\frac{a_0}{s}\bigg)\\
&+\frac{1}{\Gamma(s)}\int_0^1 t^{s-1}(\Tr(e^{-t\Delta_{(M,g)}})-a_0 t^{-1}-a_{-1/2}t^{-1/2}-a_0)dt\\& + \frac{1}{\Gamma(s)}\int_1^\infty t^{s-1}\Tr(e^{-t\Delta_{(M,g)}})dt.\end{align*}
Since $1/\Gamma(s)=s+O(s^2)$ is entire,~\eqref{eq:heattraceanonym} shows that the right-hand side is analytic on $$\{s\in\C:\Re s>-1/2,\ s\neq 1,1/2\}.$$ Thus, the right-hand side above provides an analytic extension of $\zeta_{(M,g)}$ to the twice punctured half-plane. In particular, the $\zeta$-regualrized determinant of $\Delta_{(M,g)}$, $\zdet\Delta_{(M,g)}:=e^{-\zeta_{(M,g)}'(0)}$, is well-defined.
\section{Short time asymptotic expansion of the heat trace}\label{section:heattrace}
In this section, we prove Theorem~\ref{thm:heattrace} using a patchwork technique. This is a classical method for approximating the heat trace, see, e.g.,~\cite{Kac,McKeanSinger,VDBS,LR}. The estimates of the heat trace in a smooth Riemannian surface with boundary from~\cite{McKeanSinger}, and the estimates of the heat trace in a flat and straight wedge from~\cite{VDBS} are at the foundation of the proof. Our main task is to combine the two and to handle the non-flat and non-straight behavior locally at the corners.\par 
We first provide three lemmas which give bounds for the diagonal of the heat kernel (or heat trace) at points far from the boundary (Lemma~\ref{lemma:interior}), points close to the boundary but far from a corner (Lemma~\ref{lemma:boundary}), and points close to a corner (Lemma~\ref{lemma:corner}).\par 
Throughout this section $c_1,c_2,c_3,...$, $t_1,t_2,t_3,...$ denotes positive constants which are named consistently within each statement and proof, but not consistent between different statements and proofs.
To simplify notation we write, with $g_u$ as in Theorem \ref{thm:heattrace},
$$H_u=H_{(M,g_u)},\quad d\Vol_{u}=d\Vol_{g_u},\quad d\ell_{u}=d\ell_{g_u},\quad K_u=K_{g_u},\quad k_{u}=k_{g_u},\quad \partial_{n_u}=\partial_{n_{g_u}}.$$
\begin{lemma}\label{lemma:interior}Let $(M,g_0,(p_j),(\alpha_j))$, $\sigma,$ and $g_u$ be as in Theorem~\ref{thm:heattrace}. For every compact interval $I$ there exists positive constants $c_1,\ c_2,\ c_3$, and $t_1$ such that 
\begin{equation}
\bigg|H_u(t;p,p)-\frac{1}{4\pi t} -\frac{1}{12\pi}K_{u}(p)\bigg|\leq c_1 t + \frac{c_2}{t}e^{-c_3d^2_p/t},\label{eq:interiorbound}
\end{equation}
for all $t\in(0,t_1)$, $u\in I$, and $p$ such that $\dist_{g_u}(p,\partial M) \geq \sqrt{t}$. Here $d_p=\dist_{g_u}(p,\partial M)$.
\end{lemma}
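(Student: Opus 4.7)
The plan is to combine Kac's locality principle with the parametrix construction from Section~\ref{section:mckeansinger}. Since $d_p=\dist_{g_u}(p,\partial M)\geq\sqrt{t}$, a Brownian motion started at $p$ has exponentially small probability of reaching $\partial M$ within time $t$, so the value of $H_u(t;p,p)$ should be determined, up to an error of size $O(t^{-1}e^{-c d_p^2/t})$, by the geometry of a small neighborhood of $p$.

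First, I would localize. Choose a smooth isothermal chart $(U,\varphi)$ centered at $p$ with $\overline U\subset M\setminus\{p_1,\dots,p_n\}$; in that chart, $(\varphi^{-1})^\ast g_u=e^{2\phi_u}|dz|^2$ with $\phi_u$ smooth. Then extend $e^{2\phi_u}|dz|^2$ to a smooth Riemannian metric $\tilde g_u$ on all of $\R^2$ that equals the Euclidean metric outside a large ball, e.g.\ by damping $\phi_u$ to zero with a fixed cutoff. Two applications of Kac's locality principle (combined with domain monotonicity) then yield
$$\bigl|H_u(t;p,p)-H_{(\R^2,\tilde g_u)}(t;p,p)\bigr|\leq c\, t^{-1}e^{-c'd_p^2/t},$$
provided $U$ is shrunk so that $\partial U$ sits at $g_u$-distance at least $c\,d_p$ from $p$. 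This accounts for the exponential term in~\eqref{eq:interiorbound}.

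Next, I would apply the McKean-Singer series representation~\eqref{eq:levisum} to $H_{(\R^2,\tilde g_u)}(t;p,p)$. On the diagonal the leading term is
$$H^0(t;p,p)=\frac{1}{4\pi t},$$
and a direct Gaussian computation of $(H^0\sharp G)(t;p,p)$ in isothermal coordinates yields, at order $t^0$, the classical Minakshisundaram-Pleijel coefficient $\tfrac{1}{12\pi}K_u(p)$, plus an $O(t)$ remainder. The tail of the series is controlled by the bound from Section~\ref{section:mckeansinger},
$$\sum_{n\geq 2}\bigl|(H^0\sharp_n G)(t;p,p)\bigr|\leq\sum_{n\geq 2}c_5^{\,n}\bigl[(n/2)!\bigr]^{-1}t^{n/2-1}=O(t),$$
and combining these three contributions yields~\eqref{eq:interiorbound}.

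The main obstacle is uniformity. Since $I$ is compact, $\sigma\in C^\infty(M,g_0,(p_j),(\alpha_j))$, and $M$ is compact, the factors $e^{2u\sigma}$ and their derivatives are uniformly bounded on each local chart covering $M$, uniformly in $u\in I$. By choosing the cutoff used to define $\tilde g_u$ once and for all (independently of $p$ and $u$), one can arrange that $\tilde g_u$ together with its first several derivatives are uniformly controlled and uniformly elliptic on $\R^2$. This guarantees that the constants $c_5$ and the constants arising in the explicit computation of the leading terms are independent of $u\in I$ and of $p$ (subject to $d_p\geq\sqrt t$), giving the claimed uniform error.
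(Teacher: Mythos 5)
Your localization step is essentially the same as the paper's (both use the locality principle plus domain monotonicity to replace $H_u(t;p,p)$ by the heat kernel of a smooth metric on $\R^2$ at a single interior point, up to an error $O(t^{-1}e^{-c d_p^2/t})$), but the second half of your argument has a genuine gap. You propose to run the McKean-Singer parametrix in \emph{isothermal} coordinates centered at $p$ and claim that $(H^0\sharp G)(t;p,p)$ alone yields $\tfrac{1}{12\pi}K_u(p)+O(t)$ while the tail $\sum_{n\geq 2}(H^0\sharp_n G)(t;p,p)$ is $O(t)$. The second claim is arithmetically false: the bound from Section~\ref{section:mckeansinger}, $|H^0\sharp_n G|\leq c_5^n[(n/2)!]^{-1}t^{n/2-1}$, gives $O(1)$ for $n=2$ and $O(t^{1/2})$ for $n=3$, so the tail is only $O(1)$, which is the same order as the term you are trying to extract. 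For the first claim, in an isothermal chart the conformal factor $\phi_u$ generically has $\nabla\phi_u(p)\neq 0$, so the metric $e^{2\phi_u}\delta_{ij}$ has nonzero linear terms in its Taylor expansion at the origin; this causes $(H^0\sharp G)(t;p,p)$ to pick up $O(1)$ contributions proportional to $|\nabla\phi_u(p)|^2$ that must cancel against contributions from $(H^0\sharp_2 G)(t;p,p)$ and beyond. Thus $(H^0\sharp G)$ alone does not give the curvature coefficient.

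This is precisely why the paper does \emph{not} work in isothermal coordinates. Instead, it changes to geodesic normal coordinates centered at $p$ (for a suitably extended metric), where $g_{ij}(x)=\delta_{ij}+\tfrac{1}{3}R_{ikjl}(0)x_kx_l+E_{ij}(x)$ has \emph{no linear term}. McKean-Singer's interior computation relies crucially on this vanishing: it both produces the curvature coefficient from the first correction term and improves the tail estimate to $O(t)$. To salvage your approach you would need to choose the isothermal chart so that $\nabla\phi_u(p)=0$ (possible, by a Möbius precomposition), and then rederive the improved parametrix estimates in this normalized chart---at which point you have essentially redone the McKean-Singer computation in different coordinates. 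You also do not address uniformity near corners: for $p$ close to a corner $p_j$ (but with $d_p\geq\sqrt t$) your chart $U$ must avoid $p_j$, which can make the chart radius degenerate; the paper handles this by extending the metric smoothly across each boundary arc (using several extensions when $\alpha_j\geq 2$) so that normal coordinates are defined on a uniform ball, with the locality principle then controlling the discrepancy between the extension and the true metric.
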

\begin{remark}It follows immediately from~\cite[Section 4]{McKeanSinger} that~\eqref{eq:interiorbound} holds point-wise on $M$ for each fixed $u$.
\end{remark}
\begin{proof}
In~\cite[Section 4]{McKeanSinger} the diagonal of the heat kernel for a smooth metric (which coincides with the Euclidean metric outside some compact set) on $\R^2$ (and in general $\R^n$) is estimated. For a fixed point $p\in \R^2,$ this is done by changing to geodesic normal coordinates with respect to $p\mapsto 0$, which has the effect that $$g_{ij}(x)=\delta_{ij}+\sum_{k,l=1}^2\frac{1}{3}R_{ikjl}(0)x_kx_l+E_{ij}(x)$$
where $|E_{ij}(x)|\leq c_4|x|^3$ and $|\partial_k E_{ij}(x)|\leq c_5|x|^2$ for $|x|\leq c_6$ and some $c_6>0$. Fix a smooth function $\eta:[0,\infty)\to [0,1]$ such that
$$
\eta(r)=1,\quad \forall r\in [0,c_6/2]\quad\text{and}\quad \eta(r)=0,\quad \forall r\in [c_6,\infty),
$$
and consider 
$$\tilde g_{ij}(x)=\delta_{ij}+\eta(|x|)\bigg(\frac{1}{3}\sum_{k,l=1}^2 R_{ikjl}(0)x_lx_m + E_{ij}(x)\bigg).$$
Then the computation of~\cite[Section 5]{McKeanSinger} shows that the corresponding heat kernel $\tilde H$ satisfies
\begin{equation}
\Big|\tilde H(t;0,0)-\frac{1}{4\pi t}+\frac{1}{12\pi}R_{1212}(0)\Big|\leq c_7t,\quad \forall t\in(0,t_2]\label{eq:McKeanSigerInterior}
\end{equation}
where $c_7$ and $t_2$ depend only on $c_4,\ c_5,\ \eta$, and $K_{\tilde g}(0)=R_{1212}(0)$ (the latter is not explicitly stated but is seen upon examining the proof of~\cite[Equation (4.2)]{McKeanSinger}).
\par 
Let $g_{0,ij}^p$ denote the metric $g_0$ in normal coordinates with respect to $p\in M^\circ$. Observe that, by regularity of the metric on the smooth part of the boundary and at the corner, the metric can be extended smoothly across the boundary. At corners, this is slightly subtle. If $\alpha_j<2$ then the metric can be extended across the corner as explained at the end of Remark \ref{rmk:differentdefs}. If $\alpha_j\geq 2$, we instead employ several (but finitely many) different extensions across the corner. This can be done by choosing finitely many half-planes $$\H_\alpha=\{(r,\theta)\in M_0:\theta\in(\alpha\pi,(\alpha+1)\pi)\}$$ which cover $V_j$, and then extending $\sigma_j|_{\H_\alpha}$ across the half-plane (strictly speaking we first project onto $\C$ and then extend across the projected half-plane). Let $\hat g_{0,ij}^p$ denote the normal coordinate centered at $p$ in (one of) the extension(s) of the metric $g$. By smoothness and compactness, there is a uniform lower bound on the injectivity radius, say $2c_6$. That is, we assume that $\hat g_{0,ij}^p$ is defined on $B(0,2c_6)$ for all $p\in M$. Furthermore, since the coefficients in the expansion of $\hat g_{0,ij}^p$ depend smoothly on $p$ (for each of the finitely many extensions) there exist constants $c_8$ and $c_9$ such that
$$\hat g_{0,ij}(x)=\delta_{ij}+\frac{1}{3}\sum_{k,l=1}^2 R_{0,ikjl}^p(0)x_lx_m + E^p_{0,ij}(x),$$
where $|E^p_{0,ij}(x)|\leq c_{8}|x|^3$, $|\partial_k E^p_{0,ij}(x)|\leq c_{9}|x|^2$ for all $p\in M$. Moreover, the same procedure can be carried out for $g_u$ for each $u\in\R$ and since $\hat g_u$ can be made to depend smoothly on $u$ the constants $c_6$, $c_{8}$, and $c_{9}$, can be set so that we have $$|E^p_{u,ij}(x)|\leq c_{8}|x|^3,\quad |\partial_k E^p_{u,ij}(x)|\leq c_{9}|x|^2$$ for all $|x|\leq c_6$, and $u\in I$. Hence,~\eqref{eq:McKeanSigerInterior} and the locality principle implies that
$$\Big|H_u(t;p,p)-\frac{1}{4\pi t}-\frac{1}{12\pi}K_{u}(p)\Big|\leq c_{1}t+\frac{c_{2}}{t}e^{-c_{3}d^2_p/t},\quad \forall t\in(0,t_1],$$
since the metric $\tilde g_{u,ij}^p(x)$ agrees with $\hat g_{u,ij}^p(x)$ within $B(0,c_6/2)$ (independent of $p$), and $\hat g^p_{u,ij}$ agrees with $g_{u,ij}^p$ within $B(0,c_{10}d_p)$ for some $c_{10}>0$.
\end{proof}
\begin{lemma}\label{lemma:boundary}Let $(M,g_0,(p_j),(\alpha_j))$, $\sigma,$ $\psi$, and $g_u$ be as in Theorem~\ref{thm:heattrace}. Consider a rectangular boundary patch, that is, $U\subset M\setminus\{p_1,...,p_n\}$ open and a smooth homeomorphism $\varphi:U\to [0,L)\times(a,b)$, satisfying $g_{12}(0,x_2)=0$ for all $x_2\in (a,b)$. Fix an interval $J$ compactly contained in $(a,b)$, and a second compact interval $I$. Then, there exists constants such $c_1,...,c_7$ so that, for all rectangles $R=[0,\vare)\times J$, with $\vare<L$, and $u\in I$
\begin{align}\begin{split}
&\bigg|\int_{\varphi^{-1}(R)}\psi H_u(t;p,p)d\Vol_{u} -\bigg( \frac{1}{4\pi t}\int_{\varphi^{-1}(R)}\psi\Big(1+\frac{t}{3}K_{u}\Big) d\Vol_{u}\\&-\frac{1}{8\sqrt{\pi t}}\int_{\varphi^{-1}(B)}\psi\Big(1-\frac{2}{3}\sqrt{\frac{t}{\pi}}k_{u}\Big) d\ell_{u}+\frac{1}{8\pi}\int_{\varphi^{-1}(B)}\partial_{n_{u}}\psi\sigma d\ell_{u}\bigg)\bigg|\\ &\leq c_1t^{1/2}+c_2\vare+c_3\vare^2t^{-1/2}+\frac{c_4}{t}e^{-c_5d^2_\vare/t}+\frac{c_6}{t}e^{-c_7\vare^2/t},\ \end{split}\label{eq:boundarybound}
\end{align}
for sufficiently small $t$, where $d_\vare=\min(L-\vare,\dist(J,\partial(a,b))$ and $B=\{0\}\times J$.
\end{lemma}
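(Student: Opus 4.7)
The plan is to combine the McKean--Singer Levi-sum construction of Section~\ref{section:mckeansinger} with Fermi coordinates and the method of images, in the spirit of van den Berg--Srisatkunarajah~\cite{VDBS}. First, by Kac's locality principle applied on a slightly shrunken coordinate patch $\tilde U=\varphi^{-1}([0,L')\times(a',b'))$, with $\vare<L'<L$ and $J\Subset(a',b')\Subset(a,b)$, one may replace $H_u(t;p,p)$ for $p\in\varphi^{-1}(R)$ by the Dirichlet heat kernel $\tilde H_u$ on $\tilde U$ up to an error of order $t^{-1}e^{-cd_\vare^2/t}$, which after integration over $R$ yields the $c_4t^{-1}e^{-c_5d_\vare^2/t}$ term. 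On $\tilde U$, introduce Fermi coordinates $(\rho,s)$ in which $\rho\geq 0$ is $g_u$-distance to the boundary arc and $s$ is $g_u$-arclength along it; the orthogonality hypothesis $g_{12}(0,x_2)=0$ ensures smoothness of these coordinates up to the boundary, and the metric takes the form $g_u=d\rho^2+J_u(\rho,s)^2ds^2$ with $J_u(0,s)=1$ and $\partial_\rho J_u(0,s)=-k_u(s)$.

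To build a parametrix, extend the metric smoothly across $\{\rho=0\}$ to a metric $\hat g_u$ on $(-\delta,\delta)\times(a',b')$, and apply the method of images. Let $\hat H_u$ denote the heat kernel of $\hat g_u$. A second application of Kac's locality on the inner face $\{\rho=\vare\}$ gives
$$\tilde H_u(t;p,q)=\hat H_u(t;p,q)-\hat H_u(t;p,q^\ast)+O\!\bigl(t^{-1}e^{-c\vare^2/t}+t^{-1}e^{-cd_\vare^2/t}\bigr),$$
where $q^\ast=(-\rho_q,s_q)$ is the reflection of $q$; the $e^{-c\vare^2/t}$ error contributes $c_6t^{-1}e^{-c_7\vare^2/t}$ after integration. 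Applying the Levi-sum construction of Section~\ref{section:mckeansinger} to $\hat H_u$ in Fermi coordinates (its coefficients are smooth and depend smoothly on $u$) yields, uniformly in $p\in\varphi^{-1}(R)$ and $u\in I$, expansions of the form
$$\hat H_u(t;p,p)=\frac{1}{4\pi t}+\frac{K_u(p)}{12\pi}+O(t^{1/2}),\qquad \hat H_u(t;p,p^\ast)=\frac{e^{-\rho^2/t}}{4\pi t}\bigl(1+tA_u(\rho,s)\bigr)+O(t^{1/2}),$$
with $A_u$ smooth and $A_u(0,s)$ expressible in terms of $K_u$ and $k_u$.

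Multiplying by $\psi$ and integrating $d\Vol_{g_u}=J_u\,d\rho\,ds$ over $[0,\vare)\times J$ then reproduces~\eqref{eq:boundarybound}. The diagonal term $\hat H_u(t;p,p)$ integrates via Lemma~\ref{lemma:interior} to the volume part. For the reflection term, Taylor-expand $\psi(\rho,s)J_u(\rho,s)$ in $\rho$ and compute the one-dimensional Gaussian moments $\int_0^\infty\rho^j e^{-\rho^2/t}\,d\rho=\tfrac12\Gamma\bigl(\tfrac{j+1}{2}\bigr)t^{(j+1)/2}$: the constant term ($j=0$) yields the $-\tfrac{1}{8\sqrt{\pi t}}\int_B\psi\,d\ell_u$ contribution; the linear term combined with the $tA_u$ correction ($j=1$) gives the $\partial_{n_u}\psi$ and $k_u$ boundary integrals, the coefficient $\tfrac{1}{12\pi}$ arising from the combination of Jacobian ($\partial_\rho J_u=-k_u$) and curvature corrections in $A_u$; the $O(\rho^2)$ Taylor remainder integrates to $O(\vare^2 t^{-1/2})$, producing $c_3\vare^2 t^{-1/2}$. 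The parametrix remainder yields $c_1t^{1/2}$, and smooth $O(1)$ interior terms integrated over a strip of depth $\vare$ account for $c_2\vare$.

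The main obstacle is uniformity. All constants in the Levi-sum iteration, in Kac's locality principle, and in the Fermi-coordinate construction must be bounded independently of $u\in I$ and of the position of $R$ within $U$. This is tractable because $\sigma\in C^\infty(M,g_0,(p_j),(\alpha_j))$ is smooth up to the smooth part of $\partial M$, $\psi$ has bounded $C^k$-norm on a neighborhood of $\overline{\varphi^{-1}(R)}$, and the smooth extension of $J_u$ across $\{\rho=0\}$ can be chosen to depend smoothly on $u$; the constants then depend only on finitely many derivatives of $\sigma$ and $\psi$ on a compact set and on the compact interval $I$.
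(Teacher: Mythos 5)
Your overall plan — cut the problem down to a half-space by Kac's locality, double the metric across the boundary, use the method of images, and feed the doubled heat kernel through the McKean--Singer Levi sum — is the same strategy as the paper's proof, which follows McKean--Singer Section~5 directly in the given coordinates rather than switching to Fermi coordinates. However, there is a genuine gap in the middle of the argument that, as written, would not produce the correct coefficient $\tfrac{1}{12\pi}$ on the geodesic-curvature term.

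The gap is the combination of ``extend the metric \emph{smoothly} across $\{\rho=0\}$'' with ``apply the method of images.'' These two requirements are incompatible. The identity $\tilde H_u(t;p,q)=\hat H_u(t;p,q)-\hat H_u(t;p,q^\ast)$ (up to locality errors) holds precisely because the reflection $q\mapsto q^\ast$ is an isometry of $\hat g_u$; if the extension is merely smooth but not reflection-symmetric, the difference neither vanishes on $\{\rho=0\}$ nor solves the heat equation, and the correction is not exponentially small. Conversely, a reflection-symmetric extension (the paper's $g_{u,11}(x)=g_{u,11}(x^\ast)$, $g_{u,12}(x)=-g_{u,12}(x^\ast)$, $g_{u,22}(x)=g_{u,22}(x^\ast)$, or in Fermi coordinates $J_u(-\rho,s)=J_u(\rho,s)$) is generically only Lipschitz across $\rho=0$ because $\partial_\rho J_u(0^+,s)=-k_u(s)\neq 0$. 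Consequently the ``smooth interior expansion'' $\hat H_u(t;p,p)=\tfrac{1}{4\pi t}+\tfrac{K_u(p)}{12\pi}+O(t^{1/2})$ does \emph{not} hold uniformly up to $\rho=0$: the jump in the metric derivatives forces an extra $O(t^{-1/2}e^{-c\rho^2/t})$ contribution on the diagonal, captured in McKean--Singer's framework by $H^0\sharp G(t;x,x)$, and it is exactly this diagonal second iterate that produces the $-\tfrac{1}{24\pi}\int\psi\,\partial_{x_1}(g_u^{11}\det g_u)/\det g_u^0$ contribution, i.e.\ the $\tfrac{1}{12\pi}\psi k_u$ term in the paper. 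Your bookkeeping instead places the curvature correction in a $tA_u$ factor multiplying the reflected Gaussian; after integrating $\int_0^\vare tA_u\cdot\tfrac{e^{-\rho^2/t}}{4\pi t}\,d\rho$ this contributes only $O(t^{1/2})$, not $O(1)$, while the Jacobian expansion $J_u=1-k_u\rho+O(\rho^2)$ in the reflected term alone yields coefficient $\tfrac{1}{8\pi}$, not $\tfrac{1}{12\pi}$. To close the gap you must either use the non-smooth reflection and verify that the Levi sum still converges with Lipschitz coefficients (as the paper does), tracking the $H^0\sharp G$ diagonal term explicitly, or use a smooth extension but abandon exact images in favour of a genuine boundary parametrix with a compensating boundary-layer correction — neither of which is accounted for in the proposal as written.
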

\begin{remark}It follows directly from~\cite[Section 5]{McKeanSinger}~\eqref{eq:boundarybound} holds for $\psi\equiv 1$ and a fixed $u$. Similar to Lemma~\ref{lemma:interior}, it follows by a careful read of~\cite[Section 5]{McKeanSinger} that~\eqref{eq:boundarybound} holds $\psi\equiv 1$ for $u$ in a bounded interval since the metrics $g_u$ depend smoothly on $u$. This type of expansion is considered well known, also for $\psi\not\equiv 1$, (see, e.g.,~\cite{AKR}), but we were not able to find a statement of this type when the integral is restricted to a rectangle (rather than the integral being over the entire space) and therefore we provide a proof here.
\end{remark}
\begin{proof}We follow~\cite[Section 5]{McKeanSinger}. Let $g_{u,ij}$ be the metric $g_u$ expressed in the $\varphi$ coordinate and extend $g_{u,ij}$ smoothly to $\R^+\times\R$ so that $g_{u,ij}(x)=\delta_{ij}(x)$ for $|x|$ large, and then to $\R\times\R$ by
$$g_{u,11}(x)=g_{u,11}(x^\ast),\ g_{u,12}(x)=-g_{u,12}(x^\ast),\ g_{u,22}(x)=g_{u,22}(x^\ast),$$
where $(x_1,x_2)^\ast=(-x_1,x_2)$. Let $\hat H_u$ be the (minimal) heat kernel corresponding to this extension, and note that 
$$\tilde H_u(t;x,y)=\hat H_u(t;x,y)-\hat H_u(t;x,y^\ast)$$
is the Dirichlet heat kernel with respect to $g_{u,ij}$ on $\R^+\times\R$. By the locality principle 
$$|\tilde H_u(t;x,x)-H_u(t;\varphi^{-1}(x),\varphi^{-1}(x))|\leq \frac{c_8}{t}e^{-c_9d^2_\vare/t},$$
for all $x\in R$, and therefore
\begin{equation}\bigg|\int_R\hat\psi\Big(\tilde H_u(t;x,x)-H_u(t;\varphi^{-1}(x),\varphi^{-1}(x))\Big)\sqrt{\det{g_u}}dx\bigg| \leq \frac{c_{10}\vare}{t}e^{-c_{11}d^2/t},\label{eq:levisumlocality}\end{equation}
Recall from Section \ref{section:mckeansinger}, that $\hat H_u(t;x,y)$ has the representation
$$\hat H_u(t;x,y)=\sum_{n= 0}^\infty H^0_u\sharp_n G_u(t;x,y)$$
where 
\begin{equation}|H^0_u\sharp_n G_u(t;x,y)|\leq \frac{c_{12}^n}{(n/2)!}t^{n/2-1}e^{-c_{13}|x-y|^2/t}.\label{eq:levisumbound}\end{equation}
This bound can be made uniform in $u\in I$ since $g_{u,ij}(x)=e^{2u\sigma(\varphi^{-1})}g_{0,ij}(x)$ for $x\in [0,L)\times (a,b)$. Similarly, all of the bounds that we state below can be made locally uniform in $u$ for the same reason. In the ordo notation below we always consider $\vare\to 0+$ and $t\to 0+$.
The bound~\eqref{eq:levisumbound} gives
$$\sum_{n=2}^\infty H^0_u\sharp_n G_u(t;x,y)=O(1)$$
and hence
\begin{equation}\bigg|\int_R \hat\psi\bigg(\sum_{n=2}^\infty H^0_u\sharp_n G_u(t;x,x)-\sum_{n=2}^\infty H^0_u\sharp_n G_u(t;x,x^\ast)\bigg)\bigg|\sqrt{\det g_u}dx\leq O(\vare),\label{eq:ngeqtwoterms}\end{equation}
where $\hat\psi=\psi\circ\varphi^{-1}$.
It remains to estimate integrals involving $H^0_u(t;x,x)$, $H^0_u(t;x,x^\ast)$, $H^0_u\sharp G_u(t;x,x)$, and $H^0_u\sharp G_u(t;x,x^\ast)$.
First of all, $H^0_u(t;x,x)=\frac{1}{4\pi t}+O(1)$ and hence 
\begin{align}\int_R  \hat\psi H^0_u(t;x,x)\sqrt{\det g_u}dx= \frac{1}{4\pi t}\int_{\varphi^{-1}(R)} \psi d\Vol_{g_u} +O(\vare).\label{eq:H}\end{align}
We approximate
\begin{equation*}\hat\psi(x_1,x_2)=\hat\psi(0,x_2)-\partial_{x_1}\hat\psi(0,x_2)x_1+O(x_1^2).\end{equation*} 
Below, a superscript $0$ will denote setting the $x_1$-argument to $0$ (this is in accordance with the notation of~\cite{McKeanSinger}). In the proof of~\cite[Equation (5.5a)]{McKeanSinger} it is shown that
\begin{align*}\int_0^\vare H^0_u(t;x,x^\ast)\sqrt{\det g_u}dx_1 =& \frac{1}{4\pi t}\int_{0}^\infty e^{-g_{u,11}^0 x_1^2/t}\Big(1+\frac{\partial_{x_1}g_{u,11}}{g_{u,11}^0}x_1-\partial_{x_1}g_{u,11}\frac{x_1^3}{t}\Big)\sqrt{\det g^0_u}dx_1\\& +O(\vare)+O(t^{1/2})+O(t^{-1}e^{-c_{13}\vare^2/t})\\
=&\frac{1}{8\sqrt{\pi t}}\frac{\sqrt{\det g^0_u}}{\sqrt{g_{u,11}^0}}+O(\vare)+O(t^{1/2})+O(t^{-1}e^{-c_{14}\vare^2/t}).
\end{align*}
Using the same arguments one finds
\begin{align*}\int_0^\vare x_1 H^0_u(t;x,x^\ast)\sqrt{\det g_u}dx_1 =& \frac{1}{4\pi t}\int_{0}^\infty e^{-g_{u,11}^0 x_1^2/t}x_1\sqrt{\det g^0_u}dx_1 +O(\vare)+O(t^{1/2})\\&+O(t^{-1}e^{-c_{14}\vare^2/t})\\
=&\frac{1}{8\pi}\frac{1}{\sqrt{g_{u,11}^0}}\frac{\sqrt{\det g_{u,11}^0}}{\sqrt{g_{u,11}^0}}+O(\vare)+O(t^{1/2})+O(t^{-1}e^{-c_{15}\vare^2/t})\end{align*}
and
$$\int_0^\vare x^2_1H^0_u(t;x,x^\ast)\sqrt{\det g_u}dx_1 = O(\vare)+O(t^{1/2})+O(t^{-1}e^{-c_{16}\vare^2/t}).$$
Thus,
\begin{align}\begin{split}\int_R& \hat\psi H^0_u(t;x,x^\ast)\sqrt{\det g_u}dx \\=& \frac{1}{8\sqrt{\pi t}}\int_J \hat\psi(0,x_2)\frac{\sqrt{\det g^0_u}}{\sqrt{g_{u,11}^0}}dx_2+\frac{1}{8\pi}\int_J\frac{\partial_{x_1}\hat\psi(0,x_2)}{\sqrt{g_{u,11}^0}}\frac{\sqrt{\det g_{u,11}^0}}{\sqrt{g_{u,11}^0}}dx_2\\&+O(\vare)+O(t^{1/2})+O(t^{-1}e^{-c_{16}\vare^2/t})\\
=&\frac{1}{8\sqrt{\pi t}}\int_{\varphi^{-1}(B)} \psi d\ell_{u}-\frac{1}{8\pi}\int_{\varphi^{-1}(B)}\partial_{n_{u}}\psi d\ell_{u}\\ &+O(\vare)+O(t^{1/2})+O(t^{-1}e^{-c_{17}\vare^2/t}).
\end{split}\label{eq:Hast}\end{align}
In a similar spirit, we move on to $H^0_u\sharp G_u(t;x,x)$. Studying the proof of~\cite[Equation (5.5b)]{McKeanSinger} we see that they find 
\begin{align*}\int_{0}^\vare H^0_u\sharp G_u(t;x,x)\sqrt{\det g_u}dx_1 =& -\frac{1}{24\pi}\sqrt{\det g^0_u}\frac{\partial_{x_1}(g_u^{11}\det g_u)}{\det g^0_u}\\ &+O(\vare)+O(t^{1/2})+O(t^{-1/2}e^{-c_{18}\vare^2/t}).\end{align*}
Since, $|H^0_u\sharp G_u(t;x,x)| \leq O(t^{-1/2})$ we obtain
$$ \int_0^\vare x_1|H^0_u\sharp G_u(t;x,x)|\sqrt{\det g_u}dx_1 =O(\vare^2 t^{-1/2}).$$ 
Hence,
\begin{align}\begin{split}\int_R& \hat\psi H^0_u\sharp G_u(t;x,x)\sqrt{\det g_u}dx\\
=&-\frac{1}{24\pi}\int_J \hat\psi(0,x_2)\sqrt{g_{u,11}^0}\frac{\partial_{x_1}(g_u^{11}\det g_u)}{\det g^0_u}\frac{\sqrt{\det g^0_u}}{\sqrt{g_{u,11}^0}}dx_2\\ &+O(\vare)+O(t^{1/2}) +O(t^{-1/2}e^{-c_{17}\vare^2/t})+O(\vare^2 t^{-1/2})\\=&-\frac{1}{12\pi}\int_{\varphi^{-1}(B)} \psi k_{u}d\ell_{u}+O(\vare)+O(t^{1/2})+O(t^{-1/2}e^{-c_{18}\vare^2/t})+O(\vare^2 t^{-1/2}).\end{split}\label{eq:Hsharp}\end{align}
Similarly,~\cite[Equation (5.c)]{McKeanSinger} shows that
$$\int_J\int_0^\vare H^0_u\sharp G_u(t;x,x^\ast)\sqrt{\det g_u} dx_1 dx_2 = O(\vare)+O(e^{-c_{19}\vare^2/t}),$$
and since $|H^0\sharp f(t;x,x^\ast)| =O(t^{-1/2})$ we have
\begin{equation}\int_J\int_0^\vare \hat\psi H^0_u\sharp G_u(t;x,x^\ast)\sqrt{\det g_u} dx_1 dx_2 = O(\vare)+O(e^{-c_{19}\vare^2/t})+O(\vare^2 t^{-1/2}).\label{eq:Hsharpast}\end{equation}
Combining~(\ref{eq:ngeqtwoterms}-\ref{eq:Hsharpast}) we find
\begin{align*}\int_{R} &\hat\psi \hat H_u(t;x,x)\sqrt{\det{g_u}}dx \\=& \frac{1}{4\pi t}\int_{\varphi^{-1}(R)}\psi d\Vol_{u} -\frac{1}{8\sqrt{\pi t}}\int_{\varphi^{-1}(B)}\psi\Big(1-\frac{2}{3}\sqrt{\frac{t}{\pi}}k_{u}\Big) d\ell_{u}\\ &
+\frac{1}{8\pi}\int_{\varphi^{-1}(B)}\partial_{n_{u}}d\ell_{u} + O(\vare) + O(t^{1/2})+O(\vare^2t^{-1/2})+O(t^{-1}e^{-c_{20}\vare^2/t}).\end{align*}
Finally, combining this with~\eqref{eq:levisumlocality} yields the desired result.
\end{proof}
\begin{lemma}\label{lemma:corner}Let $(M,g_0,(p_j),(\alpha_j))$, $\sigma,$ $\psi$, and $g_u$ be as in Theorem~\ref{thm:heattrace}. With the notation of Definition \ref{def:cpd}, fix $j\in\{1,...,n\}$ and suppose that $\vare_0>0$ is such that $$(\partial V_j\setminus\gamma)\cap B_{2\vare_0}=\varnothing.$$
There exists, for all $\vare\in(0,\vare_0)$, straight wedges $W_{\vare}^+$ and $W_{\vare}^-$ of opening angles $\alpha_\vare^+\pi$ and $\alpha_\vare^-\pi$ respectively such that $\alpha^\pm_\vare=\alpha_j+O(\vare)$ and
$$W^-_\vare\cap B_{2\vare}\subset V_{j}\cap B_{2\vare}\subset W^+_\vare\cap B_{2\vare}.$$ 
There also exists constants $t_1$ and $c_1,c_2,c_3,c_4$ such that, for all $\sqrt{t}<\vare<\vare_0$, $t<t_0$, and $u\in I$, where $I$ is a fixed compact interval, one has
\begin{align}\begin{split}&\bigg|\int_{\varphi^{-1}(\Omega)}H_u(t;p,p)d\Vol_{g_u}-\int_{\Omega_-}H_{(W^-_{\vare},dx)}(te^{-2\sigma_{j,u}(0)};x,x)dx\Big|\\\leq &\bigg|\int_{\Omega^+}H_{(W^+_\vare,dx)}(te^{-2\sigma_{j,u}(0)};x,x)dx-\int_{\Omega_-}H_{(W^-_\vare,dx)}(te^{-2\sigma_{j,u}(0)};x,x)dx\bigg|+E(\vare,t)\end{split}\label{eq:cornerbound}\end{align}
for all choices $\Omega\subset V_j\cap B_{3\vare/2}$ and $\Omega^-\subset \Omega\subset \Omega^+\subset B_{3\vare/2}\cap W^+_\vare$,
where $\sigma_{j,u}=\sigma_j+u\sigma\circ\varphi_j^{-1}$ and $E(\vare,t)$ is the form
$$E(\vare,t)=\frac{c_1\vare^2}{t}e^{-c_2\vare^2/t}+c_3\vare^2t^{-1/2}+c_4\vare^3t^{-1}.$$
\end{lemma}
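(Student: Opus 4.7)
My plan is to combine three ingredients: a geometric sandwich of the curvilinear corner by two straight wedges, Kac's locality principle (already used earlier in the section), and the domain monotonicity~\eqref{eq:monotonicity} of the Dirichlet heat kernel, together with a conformal reduction that freezes the metric at the vertex.

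\textbf{Construction of the wedges.} The two arcs $\pi(\gamma|_{(a,0]})$ and $\pi(\gamma|_{[0,b)})$ are smooth with well-defined tangent rays at $0$ meeting at interior angle $\alpha_j\pi$, and each arc deviates from its tangent ray by $O(r^2)$ at distance $r$ from the vertex. I take $W^-_\vare$ (resp.\ $W^+_\vare$) to be the straight wedge obtained by rotating each tangent ray inward (resp.\ outward) by an angle $C\vare$. At distance $r\leq 2\vare$ the edge of the rotated wedge lies at transverse distance of order $\vare r$ from the original tangent, which dominates the $O(r^2)$ deviation of the actual boundary arc provided $C$ is sufficiently large; this yields $W^-_\vare \cap B_{2\vare} \subset V_j \cap B_{2\vare} \subset W^+_\vare \cap B_{2\vare}$ with $\alpha^\pm_\vare = \alpha_j \pm C\vare$.

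\textbf{Freezing the conformal factor.} Since $(\varphi_j^{-1})^\ast g_u = e^{2\sigma_{j,u}} dz$ with $\sigma_{j,u}$ smooth, on $B_{2\vare}$ we have $\sigma_{j,u}(x) = \sigma_{j,u}(0) + O(\vare)$. For any constant $c$, the identities $\Delta_{e^{2c}dz} = e^{-2c}\Delta_{dz}$ in two dimensions and $d\Vol_{e^{2c}dz} = e^{2c} dx$ give
\begin{equation*}
\int_D H_{(D,e^{2c}dz)}(t;p,p)\, d\Vol_{e^{2c}dz}(p) = \int_D H_{(D,dx)}(te^{-2c};p,p)\, dp
\end{equation*}
on any Euclidean domain $D$, so the trace integral against the frozen metric $e^{2\sigma_{j,u}(0)}dz$ takes exactly the form appearing in~\eqref{eq:cornerbound}. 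The error from this replacement I will control by Duhamel's principle: writing $\tilde g = e^{2\sigma_{j,u}(0)}dz$, the difference $H_{g_u} - H_{\tilde g}$ is a convolution against $\Delta_{\tilde g} - \Delta_{g_u} = (1 - e^{-2(\sigma_{j,u}-\sigma_{j,u}(0))})\Delta_{\tilde g}$, whose coefficients are $O(\vare)$ on $B_{2\vare}$. Rewriting $\Delta_{\tilde g} H_{\tilde g} = -\partial_t H_{\tilde g}$ and invoking the Gaussian bound~\eqref{eq:gaussian} together with the derivative bound~\eqref{eq:Davies} should yield, after integration over $\Omega$ (of area $O(\vare^2)$), the bulk term $c_4\vare^3 t^{-1}$; the intermediate $c_3\vare^2 t^{-1/2}$ term is expected to arise from a boundary layer near the smooth edges of $\Omega$, where the heat kernel scales like $t^{-1/2}$.

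\textbf{Sandwich via Kac and monotonicity.} On the truncated domains, the wedge inclusions and~\eqref{eq:monotonicity} give
\begin{equation*}
H_{(W^-_\vare \cap B_{2\vare},dx)}(\tau;x,x) \leq H_{(V_j \cap B_{2\vare},dx)}(\tau;x,x) \leq H_{(W^+_\vare \cap B_{2\vare},dx)}(\tau;x,x),
\end{equation*}
with $\tau = te^{-2\sigma_{j,u}(0)}$. Kac's locality principle lets me swap the truncated kernels for the full wedge kernels $H_{(W^\pm_\vare,dx)}$ at the cost of a pointwise error $O(t^{-1}e^{-c\vare^2/t})$, which after integrating over $\Omega$ gives the first term $c_1 \vare^2 t^{-1} e^{-c_2\vare^2/t}$ of $E(\vare,t)$; the same principle relates $H_u$ to $H_{(V_j\cap B_{2\vare},g_u)}$. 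Integrating the sandwich over $\Omega^- \subset \Omega \subset \Omega^+$, and using nonnegativity together with the convention that $H_{W^-_\vare}$ is extended by $0$, then produces~\eqref{eq:cornerbound}. The main technical obstacle is the freezing step: extracting \emph{precisely} the mixture $c_3\vare^2 t^{-1/2} + c_4\vare^3 t^{-1}$ from the Duhamel perturbation requires careful separate handling of bulk and boundary contributions, closely tied to but more delicate than the interior and boundary estimates of Lemmas~\ref{lemma:interior} and~\ref{lemma:boundary}.
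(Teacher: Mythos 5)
Your overall framework is the right one and matches the paper's: sandwich the curvilinear corner between two straight wedges, use domain monotonicity and the locality principle, and freeze the conformal factor at the vertex so that the integrand reduces to a Euclidean wedge heat kernel at rescaled time. The construction of $W^\pm_\vare$ and the monotonicity/locality sandwich are essentially the same as in the paper.

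The genuine gap is in the freezing step, which you correctly flag as the main obstacle but leave unresolved. You propose a single Duhamel comparison between $H_{g_u}$ and the kernel for $\tilde g=e^{2\sigma_{j,u}(0)}dz$, noting the perturbation coefficient $1-e^{-2(\sigma_{j,u}(y)-\sigma_{j,u}(0))}$ is $O(\vare)$ on $B_{2\vare}$. But this coefficient is $O(|y|)$, and in the Duhamel convolution one integrates over all of $V_j$, where $|y|$ need not be small; even near the diagonal, the $O(\vare)$ constant-order part of the coefficient, fed into the Duhamel integrand, produces a logarithmic divergence in $s$ (the coefficient fails to provide the extra $\sqrt{s\wedge(t-s)}$ regularization needed). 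This is exactly why the paper splits the freezing into two steps: first freeze at the running point $x$ itself via Duhamel, so the coefficient becomes $e^{2\sigma_{j,u}(x)}-e^{2\sigma_{j,u}(y)}=O(|x-y|)$, which is small precisely where the heat kernel concentrates and yields a clean pointwise bound $O(t^{-1/2})$ on the diagonal; then shift the freezing point from $x$ to $0$, which is just a change of time argument $te^{-2\sigma_{j,u}(x)}\mapsto te^{-2\sigma_{j,u}(0)}$ and is controlled by the derivative bound~\eqref{eq:Davies}, giving a pointwise bound $O(\vare/t)$. Integrating these two pointwise bounds over $\Omega$ of area $O(\vare^2)$ produces exactly $c_3\vare^2 t^{-1/2}$ and $c_4\vare^3 t^{-1}$ respectively. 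Your attribution of $c_3\vare^2 t^{-1/2}$ to a "boundary layer" and $c_4\vare^3 t^{-1}$ to a "bulk" Duhamel term is therefore not how the estimate actually arises: neither term reflects boundary-versus-interior geometry; both are global pointwise bounds on the diagonal coming from the two distinct replacements. Without the two-step decomposition it is not clear how your one-step Duhamel argument would produce the stated error form, so the crux of the lemma remains unproved in your proposal.
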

\begin{proof}The regularity of $\partial M$ at $p_j$ guarantees the existence of $W^\pm_\vare$ with $\alpha_\vare^\pm = \alpha_j+O(\vare)$. This set-up is illustrated in Figure~\ref{fig:cornerwedge}. Further, the locality principle gives
\begin{align}\Big|H_u(t;\varphi^{-1}_j(x),\varphi^{-1}_j(x))-H_{(V_j,e^{2\sigma_{j,u}(x)}dx)}(t;x,x)\Big|&\leq \frac{c_5}{t}e^{-c_6\vare^2/t},\label{eq:localitycorner}
\end{align}
for sufficiently small $t$ and $x\in B_{3\vare/2}\cap V_j$. By Duhamel's principle we have, for $x\in V_j$ and sufficiently small $t$,
\begin{align}\begin{split}&\Big|H_{(V_j,e^{2\sigma_{j,u}}dx)}(t;x,x)-H_{(V_j,dx)}(te^{-2\sigma_{j,u}(x)};x,x)e^{-2\sigma_{j,u}(x)}\Big|\\
=& \bigg|\int_0^t\int_{V_j}H_{(V_j,e^{2\sigma_{j,u}}dx)}(s;x,y)(e^{2\sigma_{j,u}(x)}-e^{2\sigma_{j,u}(y)})\partial_t H_{(V_j,dx)}((t-s)e^{-2\sigma_{j,u}(x)};y,x)e^{-2\sigma_{j,u}(x)}dyds\bigg|\\
\leq &c_7t^{-1/2}\end{split}\label{eq:duhamelcorner}\end{align}
where the inequality follows from $(e^{2\sigma_{j,u}(x)}-e^{2\sigma_{j,u}(y)})=O(\dist_{M_0}(x,y))$,~\eqref{eq:gaussian}, and~\eqref{eq:Davies}. Using~\eqref{eq:gaussian} and~\eqref{eq:Davies} again we find
\begin{equation}\Big|H_{(V_j,dz)}(te^{-2\sigma_j(x)};x,x)-H_{(V_j,dz)}(te^{-2\sigma_j(0)};x,x)\Big|\leq c_8\frac{\vare}{t},\label{eq:derivativeboundcorner}\end{equation} 
for all $x\in B_{3\vare/2}$.
By domain monotonicity and the locality principle 
$$H_{(W^-_\vare,dx)}(t;x,x)-\frac{c_9}{t}e^{-c_{10}\vare^2/t}\leq H_{(V_j,dx)}(t;x,x)\leq H_{(W^+_\vare,dx)}(t;x,x)+\frac{c_9}{t}e^{-c_{10}\vare^2/t}$$
for sufficiently small $t$ and $x\in B_{3\vare/2}$. Hence,
\begin{align*}&\bigg|\int_{\Omega}  H_{(V_j,dx)}(t;x,x)dx-\int_{\Omega^-}H_{(W^-_\vare,dx)}(t;x,x)dx \bigg| \\ &\leq \bigg|\int_{\Omega^+}H_{(W^+_\vare,dx)}(t;x,x)dx-\int_{\Omega^-}H_{(W^-_\vare,dx)}(t;x,x)dx\bigg|+\frac{c_{11}\vare^2}{t}e^{-c_{10}\vare^2/t}.
\end{align*}
Combining this with~(\ref{eq:localitycorner}-\ref{eq:derivativeboundcorner}) finishes the proof.
\begin{figure}
\includegraphics[width=\linewidth]{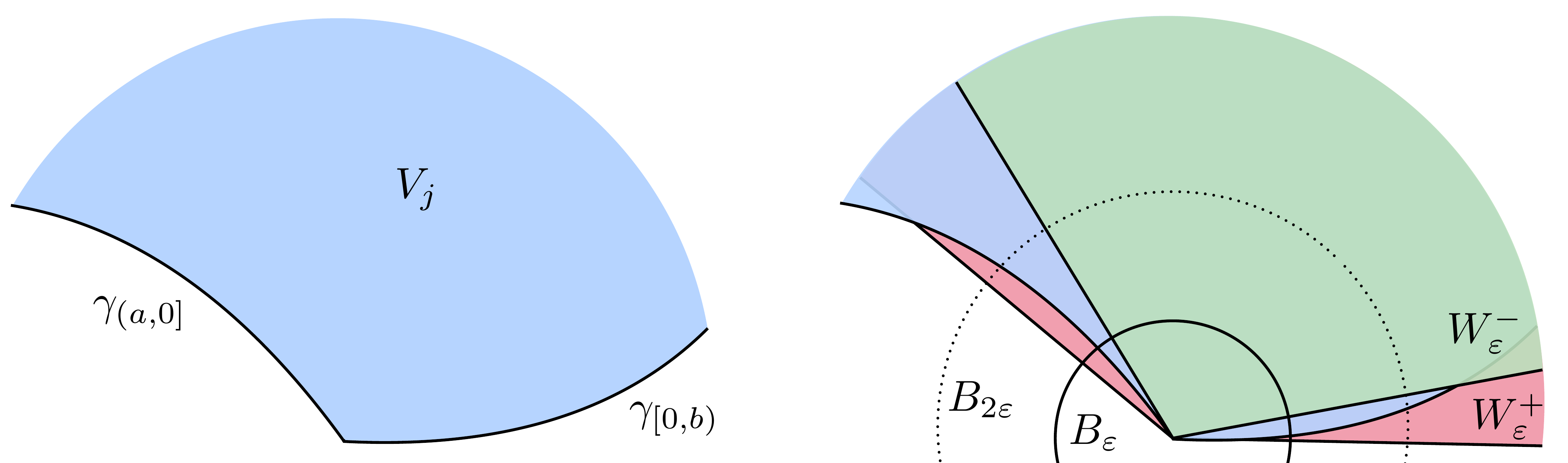}
\caption{Illustration of the set-up in Lemma~\ref{lemma:corner}\label{fig:cornerwedge}.}
\end{figure}
\end{proof}
\begin{proof}[Proof of Theorem~\ref{thm:heattrace}]
We will use Lemmas~\ref{lemma:interior}, \ref{lemma:boundary}, and \ref{lemma:corner} to estimate the integral of heat kernel along the diagonal locally. To this end, we construct a suitable partitioning of $M$, see Figure~\ref{fig:heattracesetup}. We will let the partitioning depend on a scale $\vare>0$, and the scale will depend on $t$. That is to say, we will estimate $\Tr(\psi H_{(M,g)}(t;\cdot,\cdot))$ using the partitioning of scale $\vare=\vare(t)$. Loosely, the partitioning will be constructed so that the locality principle can be used within distances $\vare$, which will give an error of the type $O(t^{-1}e^{-c_1\vare^2(t)/t})$. We therefore require that $\vare(t)=o(\sqrt{t})$.\par
For each corner $p_j$ and sufficiently small $\vare>0$, let $B_{j,\vare}=\varphi_j^{-1}(B_\vare\cap V_j)$ using the notation of Lemma~\ref{lemma:corner}. Next, observe that $\partial M\setminus\{p_1,...,p_n\}$ can be covered by rectangular coordinates as in Lemma~\ref{lemma:boundary}. For each corner $p_j$, we may also cover $\partial M$, locally at $p_j$, by the neighborhoods $V_{j,1}$ and $V_{j,2}$ (which are described at the end of Section \ref{section:cpd} and illustrated in Figure \ref{fig:Vjk}). Strictly speaking $V_{j,1}$ and $V_{j,2}$ are not subsets of $M$, but by using $V_{j,1}$ and $V_{j,2}$ we effectively obtain smooth rectangular coordinates close to $p_j$ with respect to the two boundary arcs meeting there. We may assume that the rectangular coordinates are compatible with each other, in the sense that the transition maps between different rectangular coordinates preserve rectangles (one way to achieve this is to further impose that the coordinates are boundary normal coordinates). In order to simplify estimates close to each corner we impose that the rectangular coordinates in $V_{j,k}$, $j=1,...,n$ and $k=1,2$ are boundary normal coordinates with respect to the Euclidean metric on $V_{j,k}$. Since $\partial M$ is compact we can restrict our attention to finitely many such rectangular coordinates. This gives a partitioning of $\partial M$ into finitely many sub-arcs $\eta_k$, $k=1,...,N$, where each sub-arc is compactly contained in a boundary rectangle. Let $c_2\in(0,1)$. We denote the ``rectangle of height $c_2\vare$ above $\eta_k$'' by $R_{k,\vare}$ (see Figure \ref{fig:heattracesetup}). The union $\cup_k R_{k,\vare}$ covers $\partial M$ and if $\vare$ is sufficiently small $R_{k,\vare}\cap R_{\ell,\vare} =\varnothing$ if $k\neq\ell$ unless $\eta_{k}$ and $\eta_{\ell}$ are adjacent to the same corner. However, by tuning the constant $c_2$ it can be achieved that, if $\eta_{k}$ and $\eta_{\ell}$ are adjacent to $p_j$, then
$$R_{k,\vare}\cap R_{\ell,\vare} \subset B_{j,\vare},$$
for sufficiently small $\vare>0$ ($c_2$ is controlled by $\alpha_{\min}=\min\{\alpha_1,...,\alpha_n,\pi\}$ and is roughly $c_2\approx\sin(\alpha_{\min}/2)$). For every small $\vare>0$ we partition $M$ into
$$B_{j,\vare},\ j=1,...,n,\quad R_{k,\vare}\setminus \cup_j B_{j,\vare},\ k=1,...,N,\quad M_\vare:=M\setminus(\cup_j B_{j,\vare}\cup_k R_{k,\vare}).$$For $u$ in a compact interval $I$, there exists a constant $c_3$ such that $\dist_{u}(x,\partial M)\geq c_3\vare$ for all $u\in I$ and $x\in M_\vare$.
\begin{figure}
\centering
\includegraphics[scale=0.25]{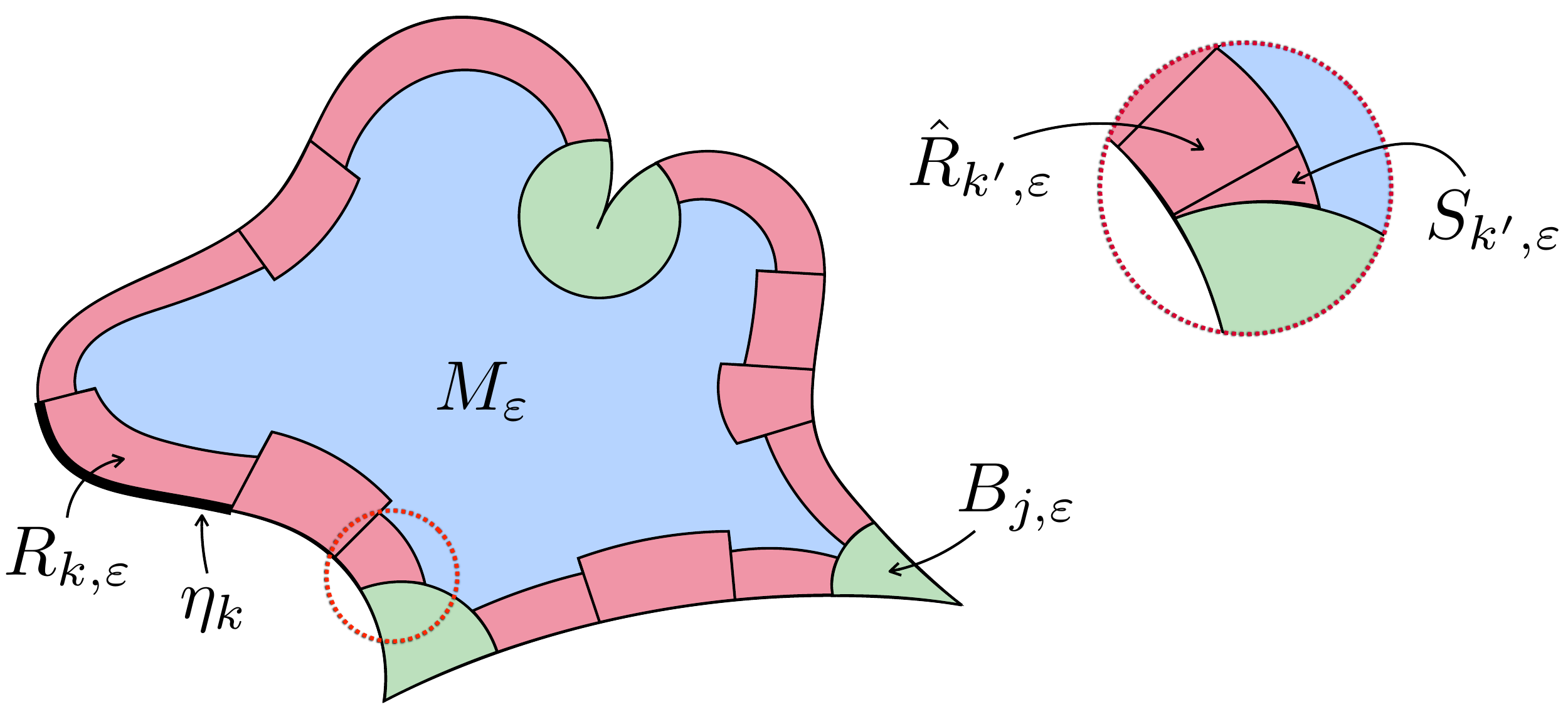}
\caption{Partitioning of $M$.\label{fig:heattracesetup}}
\end{figure}
Hence, Lemma~\ref{lemma:interior} shows that 
\begin{equation}\int_{M_\vare}\psi H_{u}(t;p,p)d\Vol_{u}=\frac{1}{4\pi t}\int_{M_\vare}\psi\Big(1+\frac{t}{3}K_{u}\Big)d\Vol_{u} +O(t)+O(t^{-1}e^{-c_4 \vare^2/t}),\label{eq:interior}\end{equation}
as $t\to 0+$.
Next, fix a $k\in\{1,...,N\}$ and consider $R_{k,\vare}\setminus \cup_j B_{j,\vare}$. If $\eta_k$ is adjacent to $p_j$, then $R_{k,\vare}\setminus B_{j,\vare}$ is not a rectangle in the local coordinates. Let $\hat R_{k,\vare}$ be the widest rectangle of height $c_2\vare$, such that $\hat R_{k,\vare}\subset R_{k,\vare}\setminus B_{j,\vare}$ and denote by 
$$S_{\vare,k}=R_{k,\vare}\setminus (B_{j,\vare}\cup \hat R_{k,\vare}),$$
see again Figure~\ref{fig:heattracesetup}.
If $\eta_k$ is not adjacent to a corner, then $R_{k,\vare}\setminus (\cup_j B_{j,\vare})=R_{k,\vare}$ so we set $\hat R_{k,\vare}=R_{k,\vare}$ and $S_{k,\vare}=\varnothing$.
Lemma~\ref{lemma:boundary} gives
\begin{align}\begin{split}&\int_{\cup_k \hat R_{k,\vare}} \psi H_u(t;p,p)d\Vol_{u}\\
&=\frac{1}{4\pi t}\int_{\cup_k \hat R_{k,\vare}}\psi\Big(1+\frac{t}{3}K_{u}\Big)d\Vol_{u}-\frac{1}{8\sqrt{\pi t}}\int_{\partial M\cap (\cup_k \hat R_{k,\vare})}\psi\Big(1-\frac{2}{3}\sqrt{\frac{t}{\pi}}k_{u}\Big)d\ell_{u}\\ &\quad +\frac{1}{8\pi}\int_{\partial M\cap (\cup_k \hat R_{k,\vare})}\partial_{n_{u}}\psi d\ell_{u}+O(t^{1/2})+O(\vare) +O(\vare^2 t^{-1/2})+O(t^{-1}e^{-c_5\vare^2/t}).\end{split}\label{eq:boundary}\end{align}
Here, the locality principle is used for $\hat R_{k,\vare}$ adjacent to $p_j$ to justify the usage of rectangular coordinates with respect to $V_{j,1}$ (or $V_{j,2}$), rather than $V_{j}$ ($V_{j}$ and $V_{j,1}$ coincide within a $c_6\vare$-neighborhood of $\hat R_{k,\vare}$).
\par
It remains to handle $B_{j,\vare}$ and $S_{k,\vare}$, which are contained in $\cup_j B_{j,3\vare/2}$ for small $\vare>0$. Hence, we may estimate the diagonal of the heat kernel using  Lemma~\ref{lemma:corner}. To do so, fix $j\in\{1,...,n\}$ and let $k_1$, $k_2$ be such that $\eta_{k_1}$ and $\eta_{k_2}$ are adjacent to $p_j$. After a re-labeling, $k_1=1$ and $k_2=2$. We now switch to $V_j$ coordinates and use the notation of Lemma~\ref{lemma:corner}. Consider $\varphi_j(S_{1,\vare})$ and let 
$S_{1,\vare}^-\subset \varphi_j(S_{1,\vare})\subset S_{1,\vare}^+$
be as in Figure~\ref{fig:sliver}. For a shape $S(\vare,h,b)$, with $b\geq 0,$ and $h\in(0,\vare)$, as in Figure~\ref{fig:sliver}, one can compute
\begin{align*}\int_{S(\vare,h,b)}&H_{(\{x_1>0\},dx)}(t;x,x)dx\\ &= \frac{b}{4\pi t}\int_0^h(1-e^{-x_2^2/t})dx_2 -\frac{1}{4\pi t}\int_0^h(1-e^{-x_2^2/t})(\vare-\sqrt{\vare^2-h^2})dx_2\\
&=\frac{hb}{4\pi t}-\frac{b+\vare}{8\sqrt{\pi t}}+\frac{\vare^2}{4\pi t}\int_0^{h/\vare}(1-e^{-(u\vare)^2/t})\sqrt{1-u^2}du+O((b+\vare)t^{-1/2}e^{-c_6h^2/t}).\end{align*}
Moreover, by~\cite[Theorem 2 and Corollary 3]{VDBS},
\begin{figure}
\centering
\includegraphics[width=0.8\linewidth]{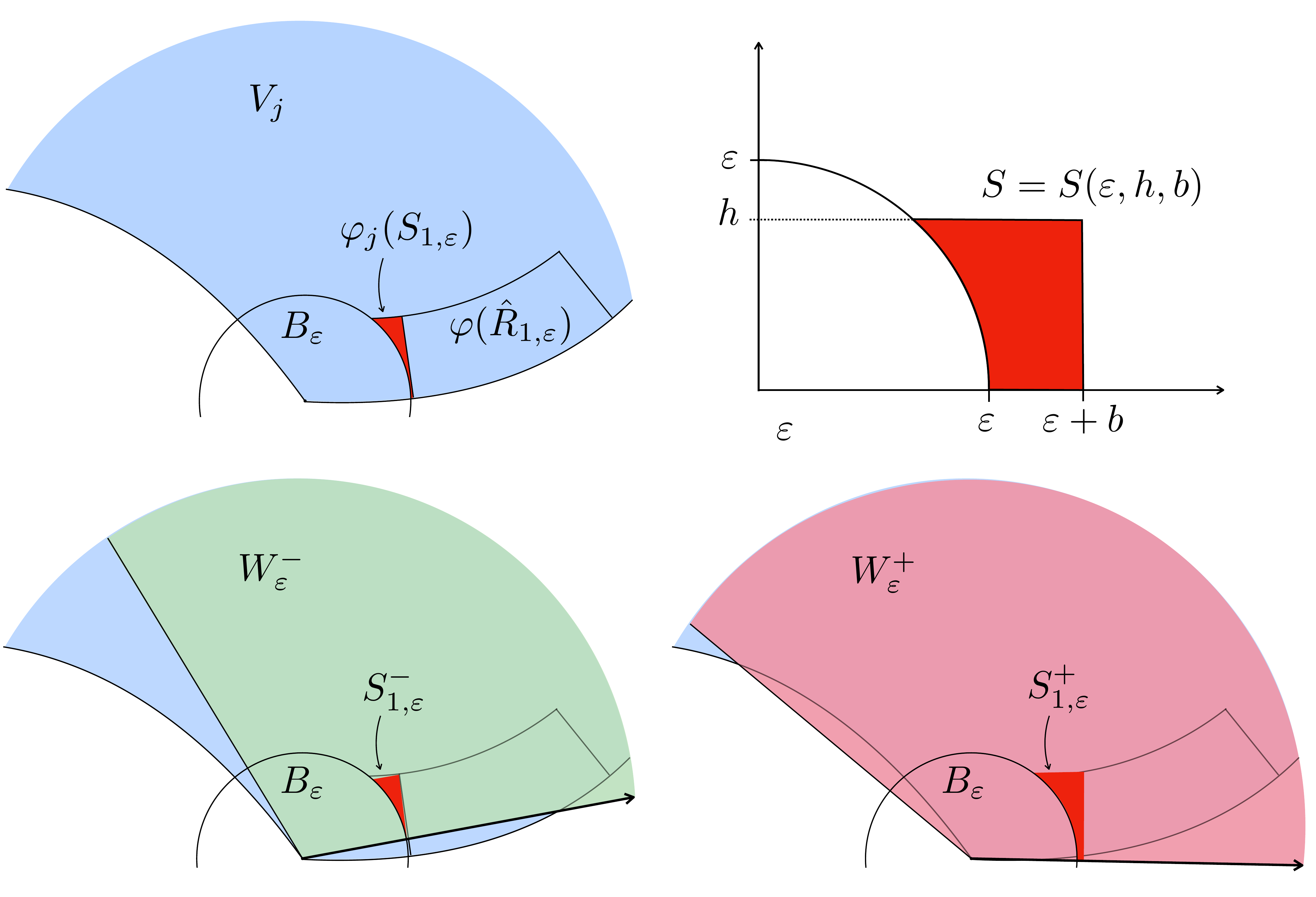}
\caption{On the top left we illustrate the set $\varphi_j(S_{1,\vare})$ and on the top right we illustrate the set $S=S(\vare,b,h)$ relative a ball of radius $\vare$ and the $x_1$-axis. In the bottom figures  we illustrate the two sets $S_{1,\vare}^+\subset W_{\vare}^+$ and $S_{1,\vare}^-\subset W_\vare^-$, which are of the type $S=S(\vare,h,b)$ relative the ball $B_\vare$ and the lower prong of $W_\vare^+$ and $W_\vare^-$ (indicated with an arrow) respectively.\label{fig:sliver}} 
\end{figure}
we have for a wedge $W=\{(r,\theta)\in M_0:\theta\in(0,\alpha\pi)\}$, $\alpha\in(0,2]$, that
\begin{equation}\int_{W \cap B_\vare}H_{(W,dx)}(t;x,x)dx = \frac{\alpha\pi\vare^2}{8\pi t}-\frac{\vare^2}{2\pi t}\int_0^1 e^{-(\vare u)^2/t}\sqrt{1-u^2}du +\frac{1-\alpha^2}{24\alpha}+A(t),\label{eq:VDBSwedge}\end{equation}
where 
$$|A(t)|\leq\begin{cases}\frac{\alpha}{8}e^{-\vare^2/t}& \alpha\in (1/2,2]\\ \frac{3}{64\alpha}e^{-(\vare\sin(\alpha\pi))^2/t}& \alpha\in(0,1/2].\end{cases}$$
Upon examining the proof, one is easily convinced that~\eqref{eq:VDBSwedge} holds also for $\alpha\in(2,\infty)$ with
\begin{equation}|A(t)|\leq \frac{\alpha}{2} e^{-\vare^2/t}.\label{eq:wedgebnd}
\end{equation}
This claim is shown in Appendix~\ref{section:appendix}. Construct $S_{2,\vare}^\pm$ analogously to $S_{1,\vare}^\pm$ and 
let $(h_{\ell,\vare}^+,b_{\ell,\vare}^+)$ and $(h_{\ell,\vare}^-,b_{\ell,\vare}^-)$ be the shape parameters corresponding to $S_{\ell,\vare}^+$ and $S_{\ell,\vare}^-$, $\ell=1,2$. Using the locality principle
and 
$$\frac{\vare^2}{4\pi t}\int_{h/\vare}^1e^{-(u\vare)^2/t}\sqrt{1-u^2}du = O(\vare^2t^{-1}e^{-c_{7}h^2/t}),$$
we obtain
\begin{align}\begin{split}&\int_{(W^\pm_\vare\cap B_{\vare})\cup S^\pm_{1,\vare}\cup S^\pm_{2,\vare}}H_{W_\vare^\pm}(t;x,x)dx\\ = &\frac{\alpha_\vare^\pm\pi\vare^2}{8\pi t}+\frac{\vare^2}{4\pi t}\int_0^{h^\pm_1/\vare}\sqrt{1-u^2}du+\frac{\vare^2}{4\pi t}\int_0^{h^\pm_2/\vare}\sqrt{1-u^2}du +\frac{h_{1,\vare}^\pm b_{1,\vare}^\pm+h_{2,\vare}^\pm b_{2,\vare}^\pm}{4\pi t}\\ & -\frac{b_{1,\vare}^\pm+b_{2,\vare}^\pm+2\vare}{8\sqrt{\pi t}}+\frac{1-(\alpha_\vare^\pm)^2}{24\alpha_\vare^\pm}+O(\vare^2t^{-1}(e^{-c_{10}(h_{1,\vare}^\pm)^2/t}+e^{-c_{7}(h_{2,\vare}^\pm)^2/t}))\\ &+O((b_{1,\vare}^\pm+\vare)t^{-1/2}e^{-c_6(h_{1,\vare}^\pm)^2/t})+O((b_{2,\vare}^\pm+\vare)t^{-1/2}e^{-c_6(h_{2,\vare}^\pm)^2/t})+O(e^{-c_{8}\vare^2/t}).\end{split}\label{eq:corner1}\end{align}
It is possible to choose $W^\pm_\vare$ (according to Lemma~\ref{lemma:corner}), and $S_{\ell,\vare}^\pm$, $\ell=1,2$ in such a way that, for $\ell=1,2$,
$$b_{\ell,\vare}^-=0,\quad h_{\ell,\vare}^-=c\vare+O(\vare^2),\quad b_{\ell,\vare}^+=O(\vare^2),\quad h_{\ell,\vare}^+=c\vare+O(\vare^2),$$
as $\vare\to 0+$. 
Then,~\eqref{eq:corner1} becomes
\begin{align}\begin{split}\int _{(W^\pm_\vare\cap B_{\vare})\cup S^\pm_{1,\vare}\cup S^\pm_{2,\vare}}H_{W_\vare^\pm}(t;x,x)dx = &\frac{\alpha_j\pi\vare^2}{8\pi t}+\frac{\vare^2}{2\pi t}\int_0^{c}\sqrt{1-u^2}du -\frac{2\vare}{8\sqrt{\pi t}}+\frac{1-\alpha_j^2}{24\alpha_j}\\&+O(\vare^3t^{-1})+O(\vare^2t^{-1/2})+O(t^{-1}e^{-c_{9}\vare^2/t}).\end{split}\label{eq:corner2}\end{align}
Since,
$$(W^-_\vare\cap B_\vare) \cup S^-_{1,\vare}\cup S^-_{2,\vare} \subset \varphi_j(B_{j,\vare}\cup S_{1,\vare}\cup S_{2,\vare})\subset (W^+_\vare\cap B_\vare) \cup S^+_{1,\vare}\cup S^+_{2,\vare}\subset B_{3\vare/2},$$
Lemma~\ref{lemma:corner} yields
\begin{align*}&\Big|\int_{B_{j,\vare}\cup S_{1,\vare}\cup S_{2,\vare}} H_{u}(t;p,p)d\Vol_{u} - \int_{(W^-_\vare\cap B_\vare) \cup S^-_{1,\vare}\cup S^-_{2,\vare}} H_{(W^-_\vare,dx)}(te^{-2\sigma_{j,u}(0)};x,x)dx\Big|\\
&=O(\vare^3t^{-1})+O(\vare^2t^{-1/2})+O(t^{-1}e^{-c_{10}\vare^2/t}).\end{align*}
Hence,
\begin{align*}\int_{B_{j,\vare}\cup S_{1,\vare}\cup S_{2,\vare}} H_{u}(t;p,p)d\Vol_{u} =& \frac{\alpha_j\pi\vare^2e^{2\sigma_{j,u}(0)}}{8\pi t}+\frac{\vare^2 e^{2\sigma_{j,u}(0)}}{2\pi t}\int_0^c\sqrt{1-u^2}du-\frac{2\vare e^{\sigma_{j,u}(0)}}{8\sqrt{\pi t}}\\&+\frac{1-\alpha^2_j}{24\alpha_j}+O(\vare^3t^{-1})+O(\vare^2t^{-1/2})+O(t^{-1}e^{-c_{10}\vare^2/t}).\end{align*}
The estimates of $b_{\ell,\vare}^\pm$ and $h_{\ell,\vare}^\pm$ as well as the regularity of the boundary at the corner shows that
\begin{align*}
&\Vol_{u}(B_{j,\vare})=\frac{\alpha_j\pi\vare^2e^{2\sigma_{j,u}(0)}}{2}+O(\vare^3)
\\
&\Vol_{u}(S_{k,\vare})=\vare^2e^{2\sigma_{j,u}(0)}\int_0^c\sqrt{1-u^2}du+O(\vare^3),\quad k=1,2.
\\
&\ell_{u}(\partial M\cap (B_{j,\vare}\cup S_{1,\vare}\cup S_{2,\vare}))=2\vare e^{\sigma_{j,u}}(0)+O(\vare^2).
\end{align*}
We deduce,
\begin{align*}\int_{B_{j,\vare}\cup S_{1,\vare}\cup S_{2,\vare}} H_{u}(t;p,p)d\Vol_{u} =& \frac{\Vol_{u}(B_{j,\vare}\cup S_{1,\vare}\cup S_{2,\vare})}{4\pi t}-\frac{\ell_{u}(\partial M\cap (B_{j,\vare}\cup S_{1,\vare}\cup S_{2,\vare}))}{8\sqrt{\pi t}} \\ &+\frac{1-\alpha^2_j}{24\alpha_j}
 +O(\vare^3t^{-1})+O(\vare^2t^{-1/2})+O(t^{-1}e^{-c_{10}\vare^2/t}).\end{align*}
Since,
\begin{align*}&\int_{B_{j,\vare}\cup S_{1,\vare}\cup S_{2,\vare}} (\psi(p)-\psi(p_j)) H_{(M,g_u)}(t;p,p)d\Vol_{u}=O(\vare)\int_{B_{j,\vare}\cup S_{1,\vare}\cup S_{2,\vare}} H_{(M,g_u)}(t;p,p)d\Vol_{u}\end{align*}
and
\begin{align*}\int_{B_{j,\vare}\cup S_{1,\vare}\cup S_{2,\vare}} \psi(p)d\Vol_{g_u} &= \psi(p_j)\Vol_{u}(B_{j,\vare}\cup S_{1,\vare}\cup S_{2,\vare})+O(\vare^3),\\
\int_{\partial M\cap (B_{j,\vare}\cup S_{1,\vare}\cup S_{2,\vare})} \psi(p)d\ell_{g_u} &= \psi(p_j)\ell_{u}(\partial M\cap (B_{j,\vare}\cup S_{1,\vare}\cup S_{2,\vare}))+O(\vare^2),\\
\int_{B_{j,\vare}\cup S_{1,\vare}\cup S_{2,\vare}} \psi(p)K_{u} d\Vol_{u} &= O(\vare^2),\\
\int_{\partial M\cap (B_{j,\vare}\cup S_{1,\vare}\cup S_{2,\vare})} \psi(p)k_{u}d\ell_{u} &= O(\vare),\\
\int_{\partial M\cap (B_{j,\vare}\cup S_{1,\vare}\cup S_{2,\vare})} \partial_{n_{u}}\psi(p)d\ell_{u} &= O(\vare),\\
\end{align*}
we obtain,
\begin{align}\begin{split}&\int_{B_{j,\vare}\cup S_{1,\vare}\cup S_{2,\vare}} \psi(p)H_{u}(t;p,p)d\Vol_{u}\\
&= \frac{1}{4\pi t}\int_{B_{j,\vare}\cup S_{1,\vare}\cup S_{2,\vare}} \psi(p)\Big(1+\frac{t}{3}K_{u}\Big)d\Vol_{u}+\frac{1}{8\pi}\int_{\partial M\cap (B_{j,\vare}\cup S_{1,\vare}\cup S_{2,\vare})}\partial_{n_{u}}\psi d\ell_{u}\\ 
&\quad -\frac{1}{8\sqrt{\pi t}}\int_{\partial M\cap (B_{j,\vare}\cup S_{1,\vare}\cup S_{2,\vare})} \psi(p)\Big(1-\frac{2}{3}\sqrt{\frac{t}{\pi}}k_{u}\Big)d\ell_{u}+\frac{1-\alpha_j^2}{24\alpha_j}\psi(p_j)\\
&\quad +O(\vare^3t^{-1})+O(\vare^2t^{-1/2})+O(\vare)+O(t^{-1}e^{-c_{10}\vare^2/t}).\end{split}\label{eq:corner}\end{align}
Together~(\ref{eq:interior}, \ref{eq:boundary}, \ref{eq:corner}) imply
\begin{align*}\int_M\psi H_u(t;p,p)d\Vol_u =& \frac{1}{4\pi t}\int_M\psi\Big(1+\frac{t}{3}K_u\Big)d\Vol_{u}-\frac{1}{8\sqrt{\pi t}}\int_{\partial M}\psi\Big(1-\frac{2}{3}\sqrt{\frac{t}{\pi}}k_u\Big)d\ell_u\\ & +\frac{1}{8\pi}\int_{\partial M}\partial_{n_u}\psi d\ell_u +\sum_{j=1}^n\frac{1-\alpha_j^2}{24\alpha_j}\psi(p_j)+O(t^{1/2})+O(\vare)\\ &+O(\vare^2 t^{-1/2})+O(\vare^3 t^{-1})+O(t^{-1}e^{-c_{11}\vare^2/t}).\end{align*}
By setting $\vare=t^{(q+1)/3}$ for $q\in(0,1/2)$, all errors on the right hand side are $O(t^q)$. This finishes the proof.
\end{proof}
\section{Polyakov-Alvarez type anomaly formula}\label{section:polalv}
This section is devoted to proving Theorem~\ref{thm:polyakov}. Let $(M,g_0,(p_j),(\alpha_j))$ be a curvilinear polygonal domain, $\sigma\in C^\infty(M,g_0,(p_j),(\alpha_j)),$ and $g_u=e^{2u\sigma}g$. To simplify notation, we write 
$$\Delta_u=\Delta_{(M,g_u)},\quad \zeta_u=\zeta_{(M,g_u)},\quad H_{(M,g_u)}=H_u,\quad L^2(M,\Vol_{g_u})=L^2_u,$$
$$d\Vol_{u}=d\Vol_{g_u},\quad d\ell_{u}=d\ell_{g_u},\quad K_u=K_{g_u},\quad k_{u}=k_{g_u},\quad \partial_{n_u}=\partial_{n_{g_u}},$$
For $\psi\in C^\infty(M,g_0,(p_j),(\alpha_j))$ we write
$$a_{-1}(u,\psi)=\frac{1}{4\pi}\int_M \psi d\Vol_{u},\quad a_{-1/2}(u,\psi)=-\frac{1}{8\sqrt{\pi}}\int_{\partial M}\psi d\ell_{u},$$
$$a_0(u,\psi)=\frac{1}{12\pi}\int_{M}\psi K_{u}d\Vol_{u}+\frac{1}{12\pi}\int_{\partial M}\psi k_{u}d\ell_{u}+\frac{1}{8\pi}\int_{\partial M}\partial_{n_{u}}\psi d\ell_{u}+\frac{1}{24}\sum_{j=1}^n\frac{1-\alpha_j^2}{\alpha_j}\sigma(p_j),$$
and $a_{-1}(u)=a_{-1}(u,1),$ $a_{-1/2}(u)=a_{-1/2}(u,1),$ and $a_{0}(u)=a_{0}(u,1).$
To prove Theorem~\ref{thm:polyakov} we need the following lemma.
\begin{lemma}\label{lemma:difftrace}Let $(M,g_0,(p_j),(\alpha_j))$ be a curvilinear polygonal domain, $\sigma\in C^\infty(M,g_0,(p_j),(\alpha_j))$, and $g_u=e^{2u\sigma}g_0$. For every $\vare>0$ and $u\in\R$ 
$$\partial_u \int_\vare^\infty \frac{1}{t}\Tr(e^{-t\Delta_u})dt = 2\Tr(\sigma e^{-\vare\Delta_{u}}).$$
\end{lemma}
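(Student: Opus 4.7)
The plan is to establish the chain
\begin{equation*}
\partial_u\int_\vare^\infty\frac{1}{t}\Tr(e^{-t\Delta_u})\,dt
= \int_\vare^\infty\frac{1}{t}\partial_u\Tr(e^{-t\Delta_u})\,dt
= -2\int_\vare^\infty\partial_t\Tr(\sigma e^{-t\Delta_u})\,dt
= 2\Tr(\sigma e^{-\vare\Delta_u}).
\end{equation*}
The first equality will be dominated convergence: both $\Tr(e^{-t\Delta_u})$ and its $u$-derivative decay exponentially as $t\to\infty$, uniformly for $u$ in any compact interval, because the bottom of the spectrum $\lambda_1^u$ is bounded away from zero locally uniformly in $u$.

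The middle equality is the core computation. In two dimensions, the conformal covariance of the Laplace--Beltrami operator gives $\Delta_u = e^{-2u\sigma}\Delta_0$ at the level of differential operators, hence $\partial_u\Delta_u = -2\sigma\Delta_u$. Duhamel's principle then yields
\begin{equation*}
\partial_u e^{-t\Delta_u} = -\int_0^t e^{-(t-s)\Delta_u}(\partial_u\Delta_u)\,e^{-s\Delta_u}\,ds.
\end{equation*}
Taking the trace, using cyclicity together with the semigroup property $e^{-(t-s)\Delta_u}e^{-s\Delta_u}=e^{-t\Delta_u}$, and then writing $\Delta_u e^{-t\Delta_u}=-\partial_t e^{-t\Delta_u}$, one obtains
\begin{equation*}
\partial_u\Tr(e^{-t\Delta_u}) = -t\Tr((\partial_u\Delta_u) e^{-t\Delta_u}) = 2t\Tr(\sigma\Delta_u e^{-t\Delta_u}) = -2t\,\partial_t\Tr(\sigma e^{-t\Delta_u}),
\end{equation*}
so $\tfrac{1}{t}\partial_u\Tr(e^{-t\Delta_u}) = -2\partial_t\Tr(\sigma e^{-t\Delta_u})$. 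The last equality is then the fundamental theorem of calculus, using $\Tr(\sigma e^{-t\Delta_u})\to 0$ as $t\to\infty$ (by $|\Tr(\sigma e^{-t\Delta_u})|\leq\|\sigma\|_{L^\infty(M)}\Tr(e^{-t\Delta_u})$ and the spectral bound above).

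The main obstacle is rigorously justifying Duhamel's formula and the cyclicity of the trace in this setting, since the Hilbert space $L^2_u$ itself depends on $u$ and $\Delta_u$ has a non-trivial domain due to the corners. I would transfer to the fixed Hilbert space $L^2_0$ via the unitary map $U_u\colon L^2_u\to L^2_0$, $f\mapsto e^{u\sigma}f$, which conjugates $\Delta_u$ to the symmetric form $\tilde\Delta_u=e^{-u\sigma}\Delta_0 e^{-u\sigma}$ on the fixed space $L^2_0$ and preserves traces of multiplication operators paired against the heat semigroup (so $\Tr(\sigma e^{-t\Delta_u}) = \Tr_{L^2_0}(\sigma e^{-t\tilde\Delta_u})$). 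The family $\tilde\Delta_u$ depends smoothly on $u$ and for $t>0$ the semigroup $e^{-t\tilde\Delta_u}$ is trace class with smooth kernel by the Gaussian upper bound~\eqref{eq:gaussian} and its derivative counterpart~\eqref{eq:Davies}, so Duhamel's identity holds in the standard operator-theoretic sense and the trace manipulations above are legitimate.
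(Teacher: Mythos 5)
Your argument is correct but genuinely differs in route from the paper's. The paper works at the level of the heat kernel: it establishes pointwise differentiability in $u$ of $H_u(t;x,y)e^{2u\sigma(y)}$ via a kernel-level Duhamel identity comparing $H_u$ against the frozen-coefficient kernel $H_0(te^{-2u\sigma(y)};x,y)$, controlled by the Gaussian and Davies bounds~\eqref{eq:gaussian}, \eqref{eq:Davies}; after restricting to the diagonal and integrating, the off-diagonal Duhamel term drops out by antisymmetry under $y\leftrightarrow z$, and the factor $t$ emerges from the chain rule applied to $H_0(te^{-2u\sigma(y)};x,y)$. This yields the same intermediate identity you reach, $\partial_u\Tr(e^{-t\Delta_u})=-2t\,\partial_t\Tr(\sigma e^{-t\Delta_u})$. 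You instead derive it from an operator-level Duhamel formula together with the conformal covariance $\Delta_u=e^{-2u\sigma}\Delta_0$ and cyclicity of the trace, where the $s$-independence of $\Tr\big((\partial_u\Delta_u)e^{-t\Delta_u}\big)$ is what produces the factor $t$; the conjugation $U_u=e^{u\sigma}\colon L^2_u\to L^2_0$ is the right device for the $u$-dependent Hilbert space. Your route is cleaner algebraically and makes the mechanism transparent, but it shifts the analytic burden onto justifying operator Duhamel and cyclicity with the unbounded factor $\tilde\Delta_u=e^{-u\sigma}\Delta_0e^{-u\sigma}$ in the presence of corners, which ultimately rests on the same kernel estimates the paper uses directly. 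The paper's version is therefore more self-contained given the bounds it has already proved, while yours is more structural and would adapt more readily to higher $u$-derivatives or other perturbations of the metric.
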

\begin{proof}
We first show that $H_u(t;x,y)e^{2u\sigma(y)}$ is differentiable in $u$ for every fixed $(t;x,y)$ in $(0,\infty)\times M^\circ\times M^\circ$. By Duhamel's principle
\begin{align*}&H_u(t;x,y)e^{2u\sigma(y)}-H_0(te^{-2u\sigma(y)};x,y)\\ &=\int_0^t\int_M H_u(s;x,z)e^{2u\sigma(z)}(e^{-2u\sigma(y)}-e^{-2u\sigma(z)})\Delta_{0,z}H_0((t-s)e^{-2u\sigma(y)};z,y)d\Vol_{g_0}(z)ds.
\end{align*}
By~\eqref{eq:gaussian},~\eqref{eq:Davies}, and smoothness of $\sigma$,
\begin{align}\begin{split}H_u(t,x,y)&\leq \frac{c_1}{t}e^{-c_2 d_u(x,y)^2/t},\\
|\Delta_{0,x}H_u(te^{-2u\sigma(y)},x,y)| &= |e^{2u\sigma(y)}\partial_t H_u(te^{-2u\sigma(y)},x,y)|\leq \frac{c_3}{t^2}e^{-c_4 d_u(x,y)^2/t},\\
|e^{-2u\sigma(y)}-e^{-2u\sigma(z)}| &\leq c_5|u|d_{u}(x,y),\end{split}\label{eq:duhamelbounds}\end{align}
for sufficiently small $t$ and $u$ in a compact interval. By applying Duhamel's principle again and using~\eqref{eq:duhamelbounds} we find
\begin{align*}&H_u(t;x,y)e^{2u\sigma(y)}-H_0(te^{-2u\sigma(y)};x,y)\\&=\int_0^t\int_M H_0(se^{-2\sigma(z)};x,z)(e^{-2u\sigma(y)}-e^{-2u\sigma(z)})\Delta_{0,z}H_0((t-s)e^{-2u\sigma(y)};z,y)d\Vol_{g_0}(z)ds+O(u^2),\end{align*}
as $u\to 0$ and $t\to 0+$. Hence,
\begin{align*}&\partial_u H_u(t;x,y)e^{2u\sigma(y)}|_{u=0}\\&=-2\sigma(y)t\partial_t H_0(t;x,y)+\int_0^t\int_M H_0(s;x,z)(2\sigma(y)-2\sigma(z))\partial_t H_0((t-s);z,y)d\Vol_{g_0}(z)ds.\end{align*}
There is nothing special about $u=0$, so we in fact have
\begin{align*}&\partial_u H_u(t;x,y)e^{2(u-u_{0})\sigma(y)}|_{u=u_0}\\ &=-2\sigma(y)t\partial_t H_{u_0}(t;x,y)+\int_0^t\int_M H_{u_0}(s;x,z)(2\sigma(y)-2\sigma(z))\partial_t H_{u_0}((t-s);z,y) d\Vol_{g_{u_0}}(z)ds.\end{align*}
By integrating along the diagonal (and noting that~\eqref{eq:duhamelbounds} gives a locally uniform and integrable upper bound on the derivative) we find
$$\partial_u \Tr(e^{-t\Delta_u})=-2t\int_M \sigma(x)\partial_t H_u(t;x,x)d\Vol_{g_u}.$$
Finally, we obtain
$$\partial_u \int_\vare^\infty\frac{1}{t} \Tr(e^{-t\Delta_u})dt = 2\int_M \sigma(x) H_u(\vare;x,x)d\Vol_{g_u} = 2\Tr(\sigma e^{-\vare\Delta_u}).$$
\end{proof}
\begin{proof}[Proof of Theorem~\ref{thm:polyakov}]
Recall, from Section \ref{section:zdet}, that 
\begin{align*}\zeta_u(s)=&\frac{1}{\Gamma(s)}\bigg(\int_0^1t^{s-1}(\Tr(e^{-t\Delta_u})-a_{-1}(u)t^{-1}-a_{-1/2}(u)t^{-1/2}-a_0(u))dt\\ &+\int_1^\infty t^{s-1}\Tr(e^{-t\Delta_u})dt+\frac{a_{-1}(u)}{s-1}+\frac{a_{-1/2}}{s-1/2}\bigg)+\frac{1}{s\Gamma(s)}a_0(u).\end{align*}
Using that $1/\Gamma(s)=s+\pmb \gamma s^2 +O(s^3)$, where $\pmb \gamma$ denotes the Euler-Mascheroni  constant, we find
\begin{align*}\zeta_u'(0)=&\int_0^1t^{-1}(\Tr(e^{-t\Delta_u})-a_{-1}(u)t^{-1}-a_{-1/2}(u)t^{-1/2}-a_0(u))dt\\ &+\int_1^\infty t^{-1}\Tr(e^{-t\Delta_u})dt+\frac{a_{-1}(u)}{-1}+\frac{a_{-1/2}}{-1/2}+\pmb \gamma a_0(u)\\ =&\lim_{\vare\to 0+}\bigg(\int_\vare^\infty t^{-1}\Tr(e^{-t\Delta_u})dt-a_{-1}(u)\vare^{-1}-2a_{-1/2}(u)\vare^{-1/2}-a_0(u)\log \vare+\pmb \gamma a_0(u)\bigg).\end{align*}
Observe that
$$\partial_u a_{-1}(u)=\frac{1}{4\pi}\partial_u\int_M e^{2u\sigma}d\Vol_{0}=\frac{1}{4\pi}\int_M 2\sigma d\Vol_{u} = 2a_{-1}(u,\sigma),$$
$$\partial_u a_{-1/2}(u)=-\frac{1}{8\sqrt\pi}\partial_u\int_{\partial M} e^{u\sigma}d\ell_{0}=-\frac{1}{8\sqrt\pi}\int_{\partial M}\sigma d\ell_{u} = a_{-1/2}(u,\sigma),$$
while
\begin{align*}a_0(u)&=\frac{1}{12\pi}\int_M K_{u}d\Vol_{u} +\frac{1}{12\pi}\int_{\partial M}k_{u}d\ell_{u}+\frac{1}{24}\sum_{j=1}^n\frac{1-\alpha_j^2}{\alpha_j}\\
&=\frac{1}{12\pi}\int_M (K_{0}+u\Delta_0\sigma)d\Vol_{0} +\frac{1}{12\pi}\int_{\partial M}(k_{0}+u\partial_{n_u}\sigma)d\ell_{u}+\frac{1}{24}\sum_{j=1}^n\frac{1-\alpha_j^2}{\alpha_j}=a_0(0).\end{align*}
Define, for every $\vare>0$,
$$F_\vare(u):=\int_\vare^\infty t^{-1}\Tr(e^{-t\Delta_u})dt-a_{-1}(u)\vare^{-1}-2a_{-1/2}(u)\vare^{-1/2}-a_0(u)\log \vare+\pmb \gamma a_0(u).$$
By Lemma~\ref{lemma:difftrace} and the above
$$F'_\vare(u)=2\Tr(\sigma e^{-\vare\Delta_{u}})-2\frac{a_{-1}(u,\sigma)}{\vare}-2\frac{a_{-1/2}(u,\sigma)}{\vare^{1/2}}.$$
The short time asymptotic expansion from Theorem~\ref{thm:heattrace} implies that $F_\vare(u)\to \zeta'_u(0)$ and $F'_\vare(u)\to 2a_0(u,\sigma)$ locally uniformly in $u$ as $\vare\to 0+$. Hence,
$$-\partial_u\log\zdet\Delta_u =\partial_u\zeta_u'(s)=2a_0(u,\sigma),$$
which proves~\eqref{eq:polyakovdifferentiated} and furthermore,
$$\log\zdet\Delta_0-\log\zdet\Delta_1=\int_0^1 2 a_0(u,\sigma)du.$$
We have
$$\int_0^1\int_{M}\sigma K_u d\Vol_u =\int_0^1\int_M\sigma(K_0+u\Delta_0\sigma)d\Vol_{0}=\int_{M}\sigma K_0 d\Vol_{0} + \frac{1}{2}\int_{M}\sigma\Delta_0\sigma d\Vol_{0},$$
$$\int_0^1\int_{\partial M}\sigma k_u d\ell_u =\int_0^1\int_{\partial M}\sigma(k_0+u\partial_{n_0}\sigma)d\ell_{0}=\int_{\partial M}\sigma k_0 d\ell_{0} + \frac{1}{2}\int_{\partial M}\sigma\partial_{n_0}\sigma d\ell_{0},$$
$$\int_0^1\int_{\partial M}\partial_{n_u}\sigma d\ell_u = \int_{\partial M}\partial_{n_0}\sigma d\ell_0.$$
By Stokes' theorem 
$$\int_{M}\sigma\Delta_0 \sigma d\Vol_{g_0}+\int_{\partial M}\sigma_{n_0}\sigma d\ell_{g_0} = \int_{M}|\nabla_{g_0}\sigma|^2d\ell_{g_0}.$$
This yields~\eqref{eq:polyakovintegrated}. 
\end{proof}
\appendix
\section{Heat kernel on an infinite straight wedge with \texorpdfstring{$\alpha>2$}{}}\label{section:appendix}
In this section, we show that~\eqref{eq:wedgebnd} holds when $\alpha>2$. In~\cite{VDBS}, they derive~\eqref{eq:VDBSwedge} by considering the Green's function 
$$G_{W}(s;z,w)=\int_0^\infty e^{-st} H_{W}(t;z,w),$$
solving 
$$\begin{cases}sG_{W}+\Delta_z G_{W} = \delta_{w}(z),\\
G|_{\partial W}=0.\end{cases}$$
The Green's function can be expressed using a Kontorovich-Lebedev transform and radial coordinates $z=re^{i\theta}, w=\rho e^{i\phi}$ with $\theta,\phi\in(0,\alpha\pi)$
\begin{align*}G_{W}(s;z,w)=&\frac{1}{\pi^2}\int_0^\infty dx K_{ix}(r\sqrt s )K_{ix}(\rho\sqrt{s})\\ &\cdot\bigg[\cosh(\pi-|\theta-\phi|)x - \frac{\sinh \pi x}{\sinh \alpha\pi x}\cosh(\alpha-\theta-\phi)x+\frac{\sinh (1-\alpha)\pi x}{\sinh \alpha\pi x}\cosh(\theta-\phi)x \bigg].\end{align*}
Details can be found in~\cite[Appendix A]{NRS}. As $G_{W}$ is found in radial coordinates the expression for $G_W$ above is valid also for $\alpha>2$. The steps of the computation of~\eqref{eq:VDBSwedge} in~\cite{VDBS} are not dependent on the size of the angles, and they find that 
$$A(t)=-\mathcal L^{-1}\bigg\{\frac{\alpha}{\pi}\int_0^\infty dx \frac{\sinh(1-\alpha)\pi x}{\sinh \alpha\pi x}\int_\vare^\infty K^2_{ix}(r\sqrt{s})rdr\bigg\}.$$
They also show that, whenever $|\gamma_0|<\gamma_1$,
\begin{align*}\mathcal L^{-1}&\bigg\{\int_0^\infty dx \frac{\sinh\gamma_0 x}{\sinh \gamma_1 x}\int_\vare^\infty K^2_{ix}(r\sqrt{s})rdr\bigg\}\\ &=\frac{\pi}{4\gamma_1}\sin\frac{\pi\gamma_0}{\gamma_1}\int_0^\infty dq e^{-\vare^2(1+\cosh q)/(2t)}(1+\cosh q)^{-1}\Big(\cosh\frac{\pi q}{\gamma_1}+\cos\frac{\pi\gamma_0}{\gamma_1}\Big)^{-1}.\end{align*}
Applying this with $\gamma_0=(1-\alpha)\pi$, $\gamma_1=\alpha\pi$ for $\alpha\geq 2$ we may bound $A(t)$ by
\begin{align*}|A(t)|&=\frac{1}{4\pi}\bigg|\sin\frac{\pi}{\alpha}\int_0^\infty dq e^{-\vare^2(1+\cosh q)/(2t)}(1+\cosh q)^{-1}\Big(\cosh\frac{ q}{\alpha}-\cos\frac{\pi}{\alpha}\Big)^{-1}\bigg|\\
&\leq \frac{1}{8\alpha}e^{-\vare^2/t}\int_0^\infty \Big|\cosh\frac{q}{\alpha}-\cos\frac{\pi}{\alpha}\Big|^{-1}dq.\end{align*}
Using $\cosh\frac{q}{\alpha}\geq 1+\frac{q^2}{2\alpha^2}$ and $\cos\frac{\pi}{\alpha}\leq 1-\frac{\pi^2}{2\alpha^2}+\frac{\pi^4}{4!\alpha^4}$ yields
\begin{align*}\cosh\frac{ q}{\alpha}-\cos\frac{\pi}{\alpha}\geq \frac{1}{2\alpha^2}\Big(q^2+\frac{\pi^2}{2}\Big)\end{align*}
so that 
$$\int_{0}^\infty \Big(\cosh\frac{ q}{\alpha}-\cos\frac{\pi}{\alpha}\Big)^{-1}dq\leq \int_0^1 2\alpha^2\frac{2}{\pi^2}dq +\int_1^\infty  2\alpha^2 q^{-2}dq \leq 4\alpha^2.$$
We conclude, as desired, that
$$|A(t)|\leq \frac{\alpha}{2}e^{-\vare^2/t}.$$
\bibliographystyle{acm}
\bibliography{ref}
\end{document}